\documentclass[journal]{IEEEtran}

\usepackage{amsmath,amssymb,amsfonts}
\usepackage{bm}
\usepackage{cite}
\usepackage{array}
\usepackage{booktabs}
\usepackage{amsthm}
\usepackage{graphicx}
\usepackage{xcolor}
\usepackage[final]{microtype}
\usepackage{tikz}
\usetikzlibrary{positioning,arrows.meta,fit}
\usepackage{placeins}
\usepackage{url}
\allowdisplaybreaks[2]
\AtBeginDocument{%
\setlength{\abovedisplayskip}{2pt plus 0.5pt minus 0.5pt}%
\setlength{\belowdisplayskip}{2pt plus 0.5pt minus 0.5pt}%
\setlength{\abovedisplayshortskip}{1pt plus 0.5pt minus 0.5pt}%
\setlength{\belowdisplayshortskip}{1pt plus 0.5pt minus 0.5pt}%
}

\newtheorem{theorem}{Theorem}

\newtheorem{remark}{Remark}

\begin{document}
\raggedbottom
\bstctlcite{IEEEtranBSTCTLNoDash}

\title{Reshaping Converter–Network Interactions in Microgrids: From Virtual Impedance to Virtual Two-Port Control}

\author{Liaoyuan Yang,
        Peng Yang,~\IEEEmembership{Member,~IEEE},
        Yang Wu,~\IEEEmembership{Member,~IEEE},
        and Feng Liu,~\IEEEmembership{Senior Member,~IEEE}%
\thanks{This work was supported by the National Natural Science Foundation of China under Grant 52407137.
(Corresponding author: Peng Yang.)}%
\thanks{P. Yang and L. Yang are with the School of Electronics and
Information, Xi'an Polytechnic University, Xi'an 710048, China (e-mail:
p-yang13@tsinghua.org.cn).}%
\thanks{Y. Wu is with the School of Electrical and Electronic Engineering,
Nanyang Technological University, Singapore 639798 (e-mail:
yangwu@ntu.edu.sg).}%
\thanks{F. Liu is with the Department of Electrical Engineering, Tsinghua
University, Beijing 100084, China (e-mail: lfeng@tsinghua.edu.cn).}%
}

\maketitle

\begin{abstract}
Converter terminal characteristics are central to both
dynamic interactions with the network and steady-state power
sharing in inverter-based microgrids. Conventional additive virtual
impedance (VI) shapes these characteristics through a single
virtual branch. This paper proposes virtual two-port control,
which uses four coordinated transfer channels to reconstruct
the terminal impedance. The connected network consequently
observes the original converter through a virtually inserted twoport interface, with conventional virtual impedance recovered as
a degenerate case. The proposed control transforms the original
impedance through a matrix linear-fractional map. We also derive
a necessary and sufficient condition for the reconstruction to
be stable and proper. Among its various potential applications in microgrids,
we develop three representative ones in detail: passivation with
reduced control effort, uncertainty compression, and seriesshunt power-flow regulation. Simulations of VSG-controlled
converters demonstrate the advantages of the proposed method
over conventional VI in these applications.
\end{abstract}

\begin{IEEEkeywords}
Inverter-based microgrids, converter-network interaction, impedance shaping,
passivity, uncertainty compression, virtual impedance, virtual two-port control.
\end{IEEEkeywords}

\section{Introduction}

With the increasing integration of renewable energy resources, voltage-source
converters (VSCs) have become a major interface between power-electronic
generation units and the power grid \cite{Xiong2022VSCReview}. The growing
penetration of VSC-interfaced resources is shifting modern power grids toward
converter-dominated systems \cite{Henderson2024GSIM}, with inverter-based
microgrids being an important setting in which multiple converters interact
through the electrical network\cite{Watson2021ScalableGFM}. Since every converter interacts with the
surrounding network through terminal voltage and current, shaping this local
relation provides a means of influencing interactions among converters,
feeders, and loads without redesigning the entire microgrid controller.

This terminal relation matters in both dynamic and steady-state operation. From the dynamic perspective, it is commonly studied as the
converter output impedance in the frequency domain
\cite{Sun2011Impedance}. Extensive studies have developed impedance modeling
and analysis methods for VSCs
\cite{Cespedes2014Impedance,Wang2018UnifiedImpedance} and revealed how the
output impedance and terminal characteristics affect converter-network
interactions and stability
\cite{Zhang2020FrequencyCoupling,Cao2020TerminalCharacteristics}. From the
steady-state perspective, the same terminal relation contributes to the
effective impedances between converters, loads, and microgrid buses, thereby
influencing power sharing and power-flow redistribution
\cite{Han2016Review,Deng2023ImpedanceReshaping}, especially in high-$R/X$
networks \cite{Guerrero2007ResistiveOutput}.
These two viewpoints lead to the same design question: how can a converter
controller reshape the terminal relation governing its interaction with the
microgrid?

Virtual impedance (VI) is widely used to shape this interface in
inverter-based microgrids. Early output-impedance designs improved load
sharing and suppressed circulating currents in parallel inverters
\cite{Guerrero2005UPS}. The realization and the
range of artificial resistive, inductive, capacitive, and frequency-dependent
impedances were subsequently formalized
\cite{He2011VirtualImpedance,Wang2015VirtualImpedance}. Later extensions used
the same virtual-branch concept for current limiting
\cite{Qoria2020CurrentLimiting,Baeckeland2024OvercurrentReview}, adaptive or iterative sharing correction
\cite{Vijay2021AdaptiveVI,An2021SuccessiveApproximation}, asymmetric and
sequence-selective shaping \cite{Jin2022AsymmetricalVI,Kim2018NegativeSequenceVI},
harmonic control \cite{Goethner2021HarmonicVI}, weak-grid impedance reshaping
\cite{Guo2024SeriesReshaping}, and operating-condition-dependent protection
\cite{Qoria2023VariableVI}. 

Related approaches also shape terminal behavior through virtual admittance
and coordinated feedback/feedforward paths. Virtual-admittance-based
grid-forming control provides voltage and frequency support for
current-controlled converters \cite{VidalLeon2023VirtualAdmittance}.
Impedance circuit models reveal the contributions of control loops,
feedforward, and decoupling paths \cite{Li2021ImpedanceCircuit}.
Two-degree-of-freedom control based on uncertainty and disturbance
estimation also directly shapes output impedance to improve inverter
voltage quality under nonlinear loads \cite{Gadelovits2017UDE}.

For the fixed additive VI considered here, the terminal action remains
series-like. A Thevenin--Norton source conversion preserves the same
one-port relation.
Existing studies have shown that VI design must account for current limits,
controller saturation, and control effort
\cite{Jafariazad2024AdaptiveVI,DehghaniArani2024FRT}. In practice, once the
required virtual action approaches the converter voltage or current limits,
the implemented terminal behavior may depart from that assumed in the design,
reducing the accuracy of microgrid response predictions based on the
designed terminal characteristic. Converter interaction characteristics are also affected by parameter
variations, delays, and operating conditions
\cite{Wang2022PassivityEnhancement,Cecati2025Interoperability}, while a fixed
additive VI applies the same branch to every member of the impedance family. At the
network level, the same VI setting simultaneously influences power sharing,
voltage regulation, and other microgrid objectives
\cite{Deng2023ImpedanceReshaping,Wu2017VirtualImpedanceRange,
Xiao2024VirtualImpedance}. Hence, uncertainty accommodation and power-flow
correction remain coupled to the same series-like action, which can restrict
the terminal behaviors that a converter can reliably realize for coordinated
microgrid operation.

To address this issue, this
paper extends virtual impedance to a more general terminal-control structure,
referred to as virtual two-port control. Its four-block voltage/current transformation gives an explicit
impedance map and, for a reciprocal subclass, a series--shunt network
interpretation, connecting dynamic interaction shaping with steady-state
network regulation. The converter terminal is reconstructed through
coordinated feedback and feedforward channels, with VI as a degenerate
special case. Our preliminary work introduced virtual two-port control for symmetric
and asymmetric impedance shaping and series--shunt power-flow regulation
\cite{YangVTPC2026}. This paper develops stable-proper return conditions,
coefficient-gain bounds, passivity-margin compression guarantees, and
positive-resistance voltage-reachability conditions, supported by
VSG-based case studies. Among the broader potential
applications of the proposed control, we select three representative applications for detailed analysis and simulations. The
contributions are as follows.

\begin{itemize}

\item We propose a four-block terminal-reconstruction law that emulates a
virtual two-port between the original VSC and the network. We show that the resulting
controlled impedance is a matrix linear-fractional transformation of the
original one. We also establish conditions ensuring that the controlled impedance
is stable and proper.

\item Three representative applications demonstrate dynamic interaction
shaping and steady-state network regulation through the same port
transformation in inverter-based microgrids. Specifically, virtual
two-port control can:
1) satisfy a prescribed passivity requirement with a lower controller
coefficient gain, reducing the internal command burden and avoiding
undesirable saturation; 2) compress uncertainty-induced variation in the
terminal impedance characteristic, as measured by the passivity-margin
spread, reducing the terminal variation presented to the microgrid; and
3) extend power-flow regulation beyond series-only compensation,
improving power-sharing accuracy in microgrids with smaller voltage drops.

\end{itemize}

The remainder of this paper is organized as follows. Section II develops the
proposed control with stable-proper condition.
Sections III-V develop the three applications with theoretical analysis and case
studies, respectively. Section VI concludes the paper.

Notations: \(\mathcal{RH}_{\infty}^{m\times n}\) denotes the set of
real-rational, stable, proper \(m\times n\) transfer matrices. For a matrix $X$, $\operatorname{He}\{X\}:=(X+X^{\mathrm H})/{2}$ denotes its Hermitian part.

\section{Proposed Virtual Two-Port Control}
\label{sec:virtual_two_port_impedance}

This section develops the proposed controller from the terminal relation
observed by the connected network. We first
introduce the four-block
voltage-current reconstruction and relate it to two-port
parameters and VI. Then, we
derive the resulting impedance map and the condition under which the reconstructed impedance is stable and proper. We also state a
subclass of the proposed control that admits a reciprocal \(\Pi\)-network interpretation.

\subsection{Control Structure}
\label{subsec:z_parameter_two_port}
At a converter terminal, let \(u(s)\) and \(i(s)\) denote the incremental \(dq\) voltage
and current, respectively, with \(i(s)\) defined positive toward the
converter:
\begin{equation*}
u(s)=
\begin{bmatrix}
\Delta U_d(s)\\
\Delta U_q(s)
\end{bmatrix}
\in\mathbb C^2,
\qquad
i(s)=
\begin{bmatrix}
\Delta I_d(s)\\
\Delta I_q(s)
\end{bmatrix}
\in\mathbb C^2.
\end{equation*}
The terminal impedance of the converter is then defined by
\begin{equation}
    u(s)=Z(s)i(s),
    \label{eq:original_impedance}
\end{equation}
where
\begin{equation*}
Z(s)=
\begin{bmatrix}
Z_{dd}(s) & Z_{dq}(s)\\
Z_{qd}(s) & Z_{qq}(s)
\end{bmatrix}
\in\mathbb C^{2\times2}.
\end{equation*}
is the original terminal impedance to be reshaped. For notational simplicity, the Laplace variable \(s\) is omitted from transfer functions when clear from context.

Our central idea is to virtually insert a controllable two-port interface
between the original converter impedance \(Z\) and the connected network, as
illustrated in Fig.~\ref{fig:virtual_two_port_concept}. The internal original port variables \((u,i)\) satisfy relation \(u=Zi\),
whereas the controlled external port variables \((\tilde u,\tilde i)\) satisfy a new relation \(\tilde u=\tilde Z \tilde i\) to be designed. By controlling the relation
between these two pairs of port variables, the interface reshapes the terminal
behavior observed by the network. 

\begin{figure}[!htbp]
\centering
\includegraphics[width=0.65\linewidth]{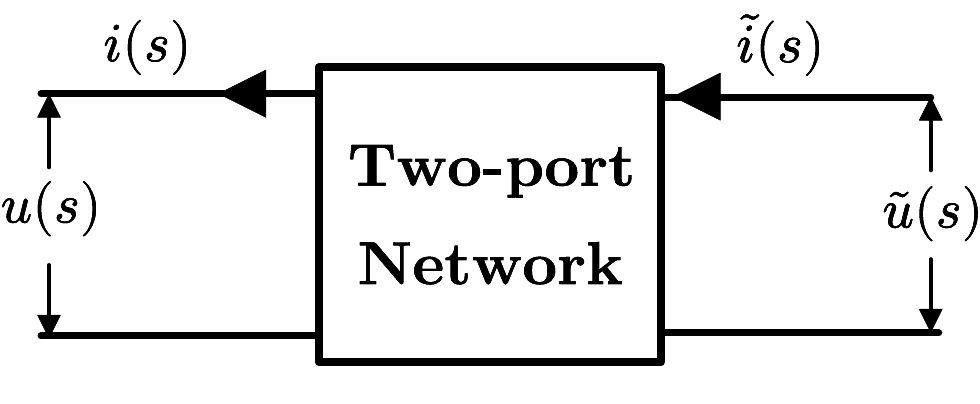}
\caption{Virtual two-port interface inserted between the original converter
impedance and the external network.}
\label{fig:virtual_two_port_concept}
\end{figure}

To realize this idea as a controller, we parameterize the desired relation
between the internal and external port variables by four transfer matrices:
\begin{equation}
\left\{
\begin{aligned}
 \tilde u(s) &= A(s)i(s)+B(s)u(s),\\
 i(s) &= C(s)\tilde i(s)+D(s)\tilde u(s),
\end{aligned}
\right.
\label{eq:twoport_implementation_ABCD}
\end{equation}
where \(A,B,C,D\in\mathcal{RH}_{\infty}^{2\times2}\) are selected according
to the terminal-shaping objective. The four channels have direct port-level interpretations. The term $Ai$
provides a current-dependent voltage injection and therefore retains the
series-like action of conventional virtual impedance. The channel $B$
determines how the original converter voltage is transferred to the
network-side port. On the current side, $C$ determines how the network current
is transmitted to the original converter port, whereas $D\tilde u$ introduces
a voltage-dependent current component with a shunt-like interpretation.
Therefore, the interface is no longer restricted to inserting an additional
voltage drop along the original current path; it can reshape both voltage and
current transfer between the converter and the network.

Equation \eqref{eq:twoport_implementation_ABCD} leads directly to the control block diagram as shown in
Fig.~\ref{fig:io_implementation}. The controller emulates the port
behavior of an inserted two-port network through signal reconstruction.

\begin{figure}[!htbp]
\centering
\includegraphics[width=0.70\linewidth]{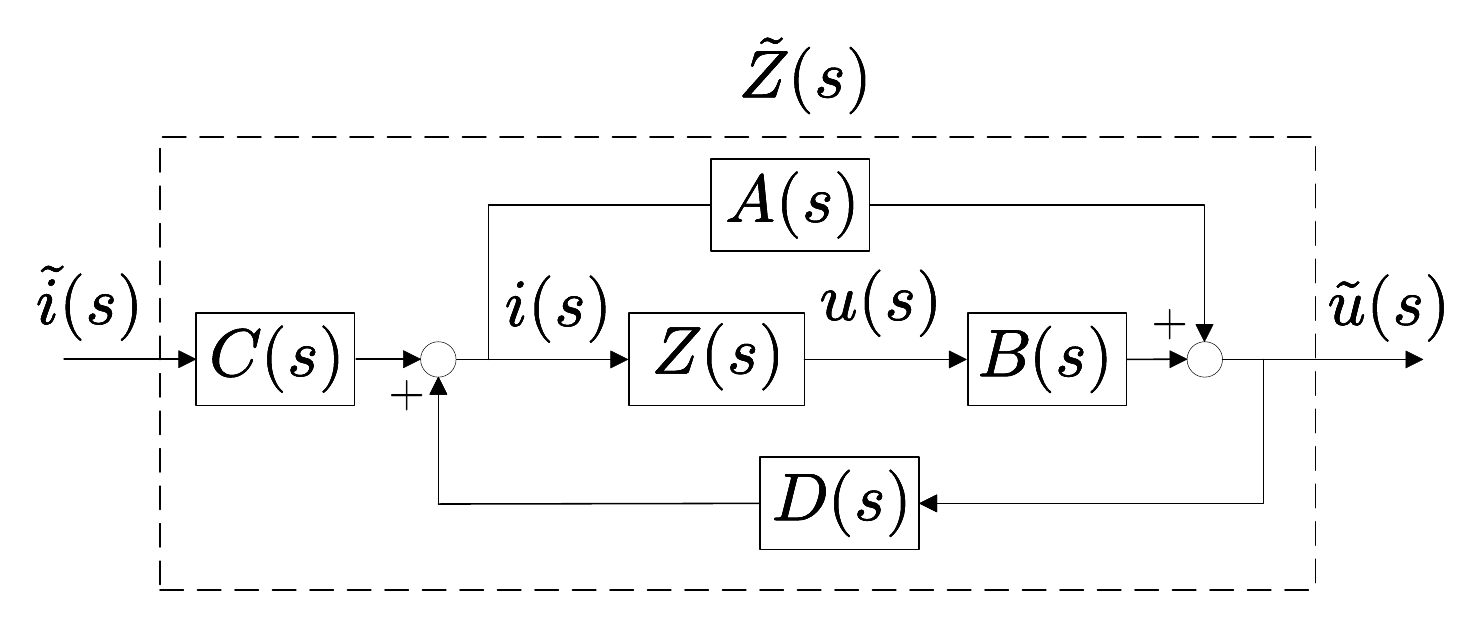}
\caption{Four-block implementation of the virtual two-port control.}
\label{fig:io_implementation}
\end{figure}

\begin{remark}[Equilibrium preservation]
\label{rem:equilibrium_compensation}
Equation~\eqref{eq:twoport_implementation_ABCD} is written for incremental
variables about a selected equilibrium. Let
\(u_{\rm f}\), \(i_{\rm f}\), \(\tilde u_{\rm f}\), and
\(\tilde i_{\rm f}\) denote the corresponding full signals, and let
\((u_0,i_0,\tilde u_0,\tilde i_0)\) denote their equilibrium values. The
controller can be implemented as
\begin{equation*}
\begin{aligned}
\tilde u_{\rm f}-\tilde u_0
&=
A(s)(i_{\rm f}-i_0)+B(s)(u_{\rm f}-u_0),\\
i_{\rm f}-i_0
&=
C(s)(\tilde i_{\rm f}-\tilde i_0)
+D(s)(\tilde u_{\rm f}-\tilde u_0).
\end{aligned}
\end{equation*}
Equivalently, if the full signals are applied directly to the four blocks,
the required constant compensation terms are
\begin{equation*}
\begin{aligned}
k_u&=\tilde u_0-A(0)i_0-B(0)u_0,\\
k_i&=i_0-C(0)\tilde i_0-D(0)\tilde u_0.
\end{aligned}
\end{equation*}
With the blocks initialized at their steady states, these offsets preserve
the selected equilibrium and the incremental impedance map. Applications~1
and~2 use this equilibrium-preserving form. Application~3 uses the full
steady-state port relation to reshape power flow, with incremental
dynamics defined about the resulting equilibrium.
\end{remark}

The virtual impedance control is a special case by setting
\begin{equation*}
    A(s)=Z_v(s),\qquad B(s)=C(s)=I,\qquad D(s)=0,
    \label{eq:vi_degenerate_case}
\end{equation*}
where $Z_v(s)$ is the virtual impedance. This gives \(\tilde u=u+Z_v i\) and \(\tilde i=i\), and the controlled
impedance becomes $\tilde Z(s)=Z(s)+Z_v(s)$. 

The connection between the proposed control structure to the classical two-port network equation is also explicit. With the
current directions in Fig.~\ref{fig:virtual_two_port_concept}, the currents
entering the virtual interface are \(-i\) and \(\tilde i\), so a finite
impedance-parameter representation of the two-port network is
\begin{equation}
 \begin{bmatrix}
  u(s)\\
  \tilde u(s)
 \end{bmatrix}
 =
 \begin{bmatrix}
  Z_{11}(s) & Z_{12}(s)\\
  Z_{21}(s) & Z_{22}(s)
 \end{bmatrix}
 \begin{bmatrix}
  -i(s)\\
  \tilde i(s)
 \end{bmatrix}.
 \label{eq:z_parameter_two_port}
\end{equation}
When \(Z_{12}\) and \(Z_{21}\) are nonsingular, elimination of the port
variables yields \eqref{eq:twoport_implementation_ABCD} with
\begin{equation}
\left\{
\begin{aligned}
 A &=-Z_{21}+Z_{22}Z_{12}^{-1}Z_{11},\\
 B &=Z_{22}Z_{12}^{-1},\\
 C &=Z_{21}^{-1}Z_{22},\\
 D &=-Z_{21}^{-1}.
\end{aligned}
\right.
\label{eq:ABCD_from_z_parameters}
\end{equation}
We remark that the controller representation \eqref{eq:twoport_implementation_ABCD} is more
convenient for implementation and also includes singular or limiting cases,
such as VI, that are not described by finite cross-parameter inverses.

\subsection{Linear-Fractional Impedance Map}
\label{subsec:equivalent_impedance_closed_loop}

Connecting the reconstruction law to the original terminal relation
\eqref{eq:original_impedance} gives
\[
 \tilde u=(A+BZ)i.
\]
Define
\begin{equation}
 H:=A+BZ,
 \qquad
 S:=(I-DH)^{-1}.
 \label{eq:stable_return_inverse}
\end{equation}
The current-reconstruction equation then gives
\[[I-DH]i=C\tilde i\]. 
Whenever the inverse in
\eqref{eq:stable_return_inverse} exists, we have
\(i=SC\tilde i\). And hence the controlled impedance is
\begin{equation}
 \widetilde Z
 =(A+BZ)[I-D(A+BZ)]^{-1}C.
 \label{eq:closed_loop_equivalent_impedance_alt}
\end{equation}

This shows that the virtual two-port control acts on the original impedance through a
matrix linear-fractional transformation. The affine term \(A+BZ\) sets the
forward transformation, \(D\) creates the return denominator, and \(C\)
postprocesses the reconstructed current. The two-port control forms a return interconnection through
\(D\) and \(H=A+BZ\). Consequently, stability and properness of the individual blocks alone do not ensure a stable and proper reconstructed terminal impedance; the return inverse
\(
(I-DH)^{-1}
\)
must also be considered. The following theorem characterizes when this
return inverse, and hence the reconstructed terminal impedance, is stable
and proper.

\begin{theorem}[Stable-proper condition]
\label{thm:stable_reconstruction}
Let the original terminal impedance and the four control blocks satisfy
\(
Z,A,B,C,D\in\mathcal{RH}_{\infty}^{2\times2}.
\)
Then \(S=(I-DH)^{-1}\) belongs to
\(\mathcal{RH}_{\infty}^{2\times2}\) if and only if $$\det[I-D(\infty)H(\infty)]\neq0$$ 
and 
$$\det[I-D(s)H(s)]\neq0, \quad \forall s:\ \operatorname{Re}(s)\geq0.$$
When these conditions hold, the controlled impedance satisfies
\(
\widetilde Z=HSC
\in\mathcal{RH}_{\infty}^{2\times2}.
\)
\end{theorem}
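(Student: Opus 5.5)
The plan is to treat \(M:=I-DH\) as a real-rational square transfer matrix and use the elementary characterization of when the inverse of a matrix in \(\mathcal{RH}_{\infty}\) is again in \(\mathcal{RH}_{\infty}\). First, since \(\mathcal{RH}_{\infty}^{2\times2}\) is closed under addition and multiplication, \(H=A+BZ\in\mathcal{RH}_{\infty}^{2\times2}\) and hence \(M\in\mathcal{RH}_{\infty}^{2\times2}\). In particular \(m(s):=\det M(s)\) is a real-rational, stable, proper scalar function, and \(M(\infty)=I-D(\infty)H(\infty)\) is well defined.

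For sufficiency, write \(S=M^{-1}=\operatorname{adj}(M)/m\). The adjugate has entries that are polynomials in the entries of \(M\), so \(\operatorname{adj}(M)\in\mathcal{RH}_{\infty}^{2\times2}\). It therefore suffices to show \(1/m\in\mathcal{RH}_{\infty}\). Properness follows because \(m(\infty)=\det M(\infty)\neq0\), so \(1/m\) has a finite limit at infinity. For stability, write \(m=n/d\) in coprime polynomial form. Then \(\deg n=\deg d\) because \(m(\infty)\neq0\), and \(d\) is Hurwitz because \(m\) is stable. The poles of \(1/m=d/n\) are the zeros of \(n\). Any zero of \(n\) with \(\operatorname{Re}(s)\geq0\) would be a zero of \(m\) there, since coprimeness and Hurwitz \(d\) rule out cancellation. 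This contradicts the hypothesis, so \(1/m\) is stable and proper. Hence \(S\in\mathcal{RH}_{\infty}^{2\times2}\), and \(\widetilde Z=HSC\) is in \(\mathcal{RH}_{\infty}^{2\times2}\) as a product of three members, consistent with \eqref{eq:closed_loop_equivalent_impedance_alt}.

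For necessity, assume \(S\in\mathcal{RH}_{\infty}^{2\times2}\) with \(MS=SM=I\). Taking determinants gives \(m(s)\det S(s)=1\) wherever both are defined. Both \(m\) and \(\det S\) are analytic on \(\operatorname{Re}(s)\geq0\) and bounded at infinity, so the identity extends to the closed right half-plane by continuity. It then forces \(m(s)\neq0\) for all \(\operatorname{Re}(s)\geq0\). Evaluating the identity at \(s=\infty\) gives \(\det M(\infty)\det S(\infty)=1\), so \(\det[I-D(\infty)H(\infty)]\neq0\).

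The main obstacle is the point at infinity and on the imaginary axis. The subtlety is that \(\det M(\infty)\neq0\) is not implied by nonvanishing on the finite closed right half-plane, since \(m\) could tend to zero as \(|s|\to\infty\). This is why the condition must be stated separately, and why the proof must use the coprime-factorization degree argument rather than only a pointwise invertibility argument. I will take care that the zeros of \(m\) are identified exactly with the poles of \(1/m\), which needs the stability of \(m\) so that no right-half-plane pole–zero cancellation can occur.
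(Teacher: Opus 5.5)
Your proof is correct and follows the same route as the paper's appendix proof. Both argue that \((I-DH)^{-1}\) is proper exactly when \(I-D(\infty)H(\infty)\) is nonsingular, that it is stable exactly when \(\det[I-D(s)H(s)]\) has no zeros in \(\operatorname{Re}(s)\geq 0\), and that \(\widetilde Z=HSC\in\mathcal{RH}_{\infty}^{2\times2}\) then follows by closure under products; you simply prove, via the adjugate, coprime-factor, and determinant-identity arguments, the standard facts that the paper only asserts.
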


\noindent\emph{Proof:} See
Appendix~\ref{app:proof_stable_reconstruction}.

Theorem~\ref{thm:stable_reconstruction} identifies the return channel \(DH\) as the source of the additional poles introduced by the two-port
reconstruction. The blocks \(A\), \(B\), and \(Z\) determine \(H\), and \(D\)
closes the return path; therefore, the additional finite poles are governed by
the zeros of \(\det[I-D(s)H(s)]\). The block \(C\) changes
\(\widetilde Z=HSC\) but does not enter this denominator. A
convenient sufficient design condition is
\begin{equation}
 \|DH\|_\infty<1,
 \label{eq:return_small_gain_condition}
\end{equation}
which guarantees \(S\in\mathcal{RH}_\infty\). Alternatively, the exact
conditions in Theorem~\ref{thm:stable_reconstruction} can be verified directly.

When \(D=0\), the map reduces to the affine form
\(\widetilde Z=(A+BZ)C\), and no return poles are introduced. VI is the
further specialization \((A,B,C)=(Z_v,I,I)\), which gives
\(\widetilde Z=Z+Z_v\). A nonzero \(D\) supplies the genuinely
linear-fractional shaping freedom, together with the return condition in
Theorem~\ref{thm:stable_reconstruction}.

\subsection{Reciprocal $\Pi$-Network Realization}
\label{subsec:reciprocal_pi_equivalent}

The general four-block controller need not admit a circuit realization. However, a
reciprocal subclass of the proposed control can provide a direct branch interpretation, which may be useful in network-level design. Consider the virtual $\Pi$ network in
Fig.~\ref{fig:virtual_pi_network}, with converter-side shunt impedance $Z_1$,
series impedance $Z_2$, and line-side shunt impedance $Z_3$. Define
$Y_k:=Z_k^{-1}$. Its nodal equation is
\begin{equation}
 \begin{bmatrix}
  -i\\
  \tilde i
 \end{bmatrix}
 =
 \begin{bmatrix}
  Y_1+Y_2 & -Y_2\\
  -Y_2 & Y_2+Y_3
 \end{bmatrix}
 \begin{bmatrix}
  u\\
  \tilde u
 \end{bmatrix}.
 \label{eq:pi_admittance_two_port}
\end{equation}

\begin{figure}[!htbp]
 \centering
 \includegraphics[width=0.92\columnwidth]{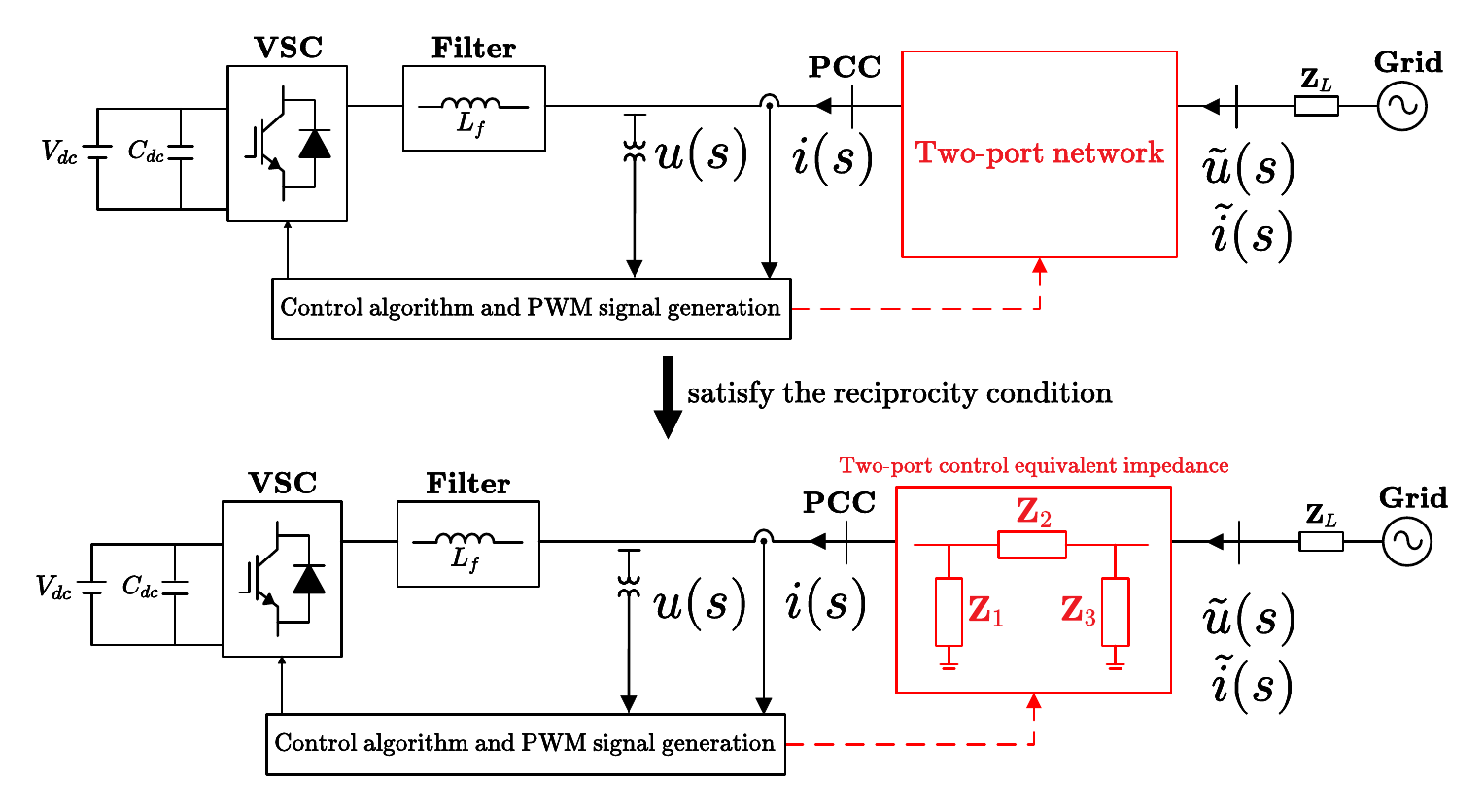}
 \caption{Reciprocal virtual $\Pi$ network and its terminal variables.}
 \label{fig:virtual_pi_network}
\end{figure}

Solving \eqref{eq:pi_admittance_two_port} in the form of
\eqref{eq:twoport_implementation_ABCD} gives
\begin{equation}
\left\{
\begin{aligned}
 A &= Z_2,\\
 B &= I+Z_2Y_1,\\
 C &= I+Y_1Z_2,\\
 D &= -(Y_1+Y_3+Y_1Z_2Y_3).
\end{aligned}
\right.
\label{eq:reciprocal_pi_abcd_blocks}
\end{equation}
The full conversion between impedance parameters and finite reciprocal
$\Pi$ branches, together with its nonsingularity and uniqueness conditions,
is given in Appendix~\ref{app:reciprocal_pi_details}.

The network interpretation is clearest when the converter-side shunt is open,
i.e., $Y_1=0$. Equation~\eqref{eq:reciprocal_pi_abcd_blocks} then reduces to
\begin{equation}
 A=Z_2,\qquad B=C=I,\qquad D=-Y_3,
 \label{eq:open_converter_shunt_abcd}
\end{equation}
or, equivalently,
\begin{equation}
 \tilde u=u+Z_2i,\qquad
 \tilde i=i+Y_3\tilde u.
 \label{eq:series_shunt_terminal_reconstruction}
\end{equation}
The series branch $Z_2$ reconstructs terminal voltage, while the shunt branch
$Y_3$ reconstructs terminal current. Setting $Y_3=0$ and $Z_2=Z_v$ recovers
conventional VI exactly. A nonzero $Y_3$ therefore supplies the additional
shunt degree of freedom that independently modifies a network self-admittance;
Section~\ref{subsec:nonnegative_resistance_capability} uses this structured
subset for voltage reachability and network-symmetry regulation.

The general \(A,B,C,D\) construction and its stable-proper return-inverse
condition apply to all three applications. Applications 1 and 2 use the
unrestricted reconstruction degrees of freedom for dynamic impedance shaping,
whereas Application 3 uses the reciprocal series-shunt specialization for
network-level voltage and current coordination.

\section{Application 1: Passivation Control in Microgrid with Less Control Effort}
\label{sec:finite_gain_passivation}

\subsection{Design Objective}

In a microgrid, each converter interacts with neighboring
units and feeders through its terminal voltage-current relation. Passivity-oriented impedance shaping provides a local means of mitigating
adverse converter-network interactions \cite{Miranbeigi2022PassivityVI}.
Against this background, the first application asks a practical design
question: when the original converter impedance is nonpassive, can the
required passivity margin be reached with smaller controller coefficients
than conventional VI control? This matters because large coefficients can
generate large inner-loop commands that activate converter limits and make
the implemented terminal behavior depart from its design
\cite{Baeckeland2024OvercurrentReview,Jafariazad2024AdaptiveVI}.

To compare the two structures on the same basis, both are required
to achieve the same passivity margin over the same frequency set.
Let \(\Omega \subset \mathbb{R}_{\geq 0}\) denote that angular-frequency set.
For a stable proper terminal impedance \(Z\), its passivity margin over
\(\Omega\) is defined as
\begin{equation}
\mu(Z;\Omega)
 :={}
 \inf_{\omega\in\Omega}
 \lambda_{\min}
 \left[\operatorname{He}\{Z(\mathrm{j}\omega)\}\right],
 \label{eq:passivity_margin_definition}
\end{equation}
A prescribed target \(\varepsilon>0\) is attained when
\(\mu(Z;\Omega)\geq\varepsilon\). The comparison also requires a common
measure of controller magnitude. All coefficients below are expressed in per
unit on fixed converter bases. Relative to the uncontrolled identity interface
\((A,B,C,D)=(0,I,I,0)\), define
\begin{equation}
 \gamma:=
 \left\|
 \begin{bmatrix}
 A&B-I\\
 C-I&D
 \end{bmatrix}
 \right\|_\infty.
 \label{eq:twoport_dimensionless_gain}
\end{equation}
VI control is the degenerate choice
\((A,B,C,D)=(Z_v,I,I,0)\). Hence the same definition reduces to
\(\gamma=\|Z_v\|_\infty\), allowing both controller structures to be
compared with one measure. The analysis first establishes the smallest
possible VI gain and then constructs a virtual two-port control that can improve on this
benchmark.

\subsection{VI Benchmark: Minimum Passivation Gain}

Additive VI gives \(\widetilde Z=Z+Z_v\) and shifts the Hermitian part of
the terminal impedance directly. The question is how large this additive
coefficient must be to attain \(\varepsilon\). The following theorem provides
the VI benchmark for the two-port comparison.

\begin{theorem}[Minimum VI gain for passivation]
\label{thm:finite_gain_vi_limit}
Let \(Z\in\mathcal{RH}_{\infty}^{2\times2}\), let
\(\mu_0:=\mu(Z;\Omega)\), and prescribe
\(\varepsilon>\max\{0,\mu_0\}\). Then the minimal gain for VI control, denoted by $\gamma_{\rm VI}^{\star}$, is \footnote{Subscripts “VI” and “TP” are
used to distinguish variables associated with different controls.}
\begin{equation}
 \gamma_{\rm VI}^{\star}
 :=
 \inf_{\substack{
 Z_v\in\mathcal{RH}_{\infty}^{2\times2}\\
 \mu(Z+Z_v;\Omega)\geq\varepsilon}}
 \|Z_v\|_\infty
 =\varepsilon-\mu_0.
 \label{eq:minimum_vi_gain}
\end{equation}
\end{theorem}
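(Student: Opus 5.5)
The proof splits into a lower bound and an achievability construction that meets it.

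\emph{Lower bound.} Fix any admissible \(Z_v\in\mathcal{RH}_{\infty}^{2\times2}\) with \(\mu(Z+Z_v;\Omega)\geq\varepsilon\), and fix \(\omega\in\Omega\). The Hermitian part is additive, so \(\operatorname{He}\{Z+Z_v\}=\operatorname{He}\{Z\}+\operatorname{He}\{Z_v\}\) at \(s=\mathrm{j}\omega\). Weyl's inequality then gives
\[
\lambda_{\min}[\operatorname{He}\{Z+Z_v\}]\leq \lambda_{\min}[\operatorname{He}\{Z\}]+\lambda_{\max}[\operatorname{He}\{Z_v\}].
\]
The largest eigenvalue of a Hermitian part is bounded by the spectral norm, since \(x^{\mathrm H}\operatorname{He}\{M\}x=\operatorname{Re}(x^{\mathrm H}Mx)\leq\|M\|_2\) for unit \(x\). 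Hence \(\lambda_{\max}[\operatorname{He}\{Z_v(\mathrm{j}\omega)\}]\leq\|Z_v(\mathrm{j}\omega)\|_2\leq\|Z_v\|_\infty\). Combining these,
\[
\varepsilon\leq\lambda_{\min}[\operatorname{He}\{Z(\mathrm{j}\omega)\}]+\|Z_v\|_\infty
\]
holds for every \(\omega\in\Omega\). Taking the infimum over \(\omega\in\Omega\) yields \(\varepsilon\leq\mu_0+\|Z_v\|_\infty\), that is, \(\|Z_v\|_\infty\geq\varepsilon-\mu_0\). The hypothesis \(\varepsilon>\max\{0,\mu_0\}\) makes this bound positive, so the bound is meaningful. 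It also guarantees that \(\mu_0\) is finite, because \(Z\) is bounded on the imaginary axis.

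\emph{Achievability.} Take the constant resistive choice \(Z_v=(\varepsilon-\mu_0)I\), which lies in \(\mathcal{RH}_{\infty}^{2\times2}\) and has \(\|Z_v\|_\infty=\varepsilon-\mu_0\). It shifts every eigenvalue of \(\operatorname{He}\{Z(\mathrm{j}\omega)\}\) up by exactly \(\varepsilon-\mu_0\). Therefore \(\mu(Z+Z_v;\Omega)=\mu_0+(\varepsilon-\mu_0)=\varepsilon\), so the constraint holds. The infimum is thus attained and equals \(\varepsilon-\mu_0\).

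The only step needing care is the eigenvalue bound \(\lambda_{\max}(\operatorname{He}\{M\})\leq\|M\|_2\), together with passing the pointwise inequality through the infimum defining \(\mu_0\). Both are elementary, and the interchange is fine because the inequality holds for every \(\omega\in\Omega\). No stability question arises: \(Z+Z_v\) remains in \(\mathcal{RH}_\infty\) because VI introduces no return loop, which is the \(D=0\) case discussed after Theorem~\ref{thm:stable_reconstruction}.
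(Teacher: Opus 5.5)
Your proof is correct and follows the paper's argument: the paper also uses Weyl's inequality to obtain $\mu(Z+Z_v;\Omega)\le\mu_0+\|Z_v\|_\infty$, hence $\|Z_v\|_\infty\ge\varepsilon-\mu_0$, and then attains this bound with the constant choice $Z_v=(\varepsilon-\mu_0)I$. You additionally spell out the step $\lambda_{\max}[\operatorname{He}\{M\}]\le\|M\|_2$ and the pointwise-then-infimum passage, both of which the paper leaves implicit.
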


\noindent\emph{Proof:} See Appendix~\ref{app:proof_minimum_vi_gain}.

The theorem establishes the limitation of additive VI in this application:
the passivity shortage \(\varepsilon-\mu_0\) appears directly as the minimum
coefficient gain. A larger shortage therefore requires a larger reconstruction coefficient and
can amplify the commands demanded from the retained inner loops. Consequently,
stronger passivity-oriented terminal shaping through additive VI places a
proportionally larger implementation burden on the local converter.

\subsection{Two-Port Design with Reduced Coefficient Gain}

The VI benchmark shows that an additive controller requires a coefficient gain
equal to the full passivity shortage. We next ask whether the virtual two-port can attain the same prescribed margin with a smaller
coefficient gain. The following theorem gives a constructive design and the
exact condition under which it improves on the minimum VI benchmark.
\begin{theorem}[Two-port passivation and gain reduction]
\label{thm:reduced_gain_tp}
Under the conditions of Theorem~\ref{thm:finite_gain_vi_limit}, choose any
scaling factor \(0<\rho<1\) and set
\begin{equation}
 A=(\varepsilon-\rho\mu_0)I,\qquad
 B=\rho I,\qquad C=I,\qquad D=0.
 \label{eq:reduced_gain_tp_blocks}
\end{equation}
This realization has
\(S=I\in\mathcal{RH}_{\infty}^{2\times2}\) and gives
\begin{equation}
 \widetilde Z_\rho=(\varepsilon-\rho\mu_0)I+\rho Z,
 \qquad
 \mu(\widetilde Z_\rho;\Omega)=\varepsilon,
 \label{eq:reduced_gain_tp_impedance}
\end{equation}
with coefficient gain
\begin{equation}
 \gamma_{\rm TP}
 =\sqrt{(\varepsilon-\rho\mu_0)^2+(1-\rho)^2}.
 \label{eq:reduced_gain_tp_gain}
\end{equation}
Compared with the minimum VI gain, the strict advantage $\gamma_{\rm TP}<\gamma_{\rm VI}^{\star}$
holds if and only if
\begin{equation}
 (1+\rho)\mu_0^2-2\varepsilon\mu_0+\rho-1>0.
 \label{eq:strict_gain_condition}
\end{equation}
\end{theorem}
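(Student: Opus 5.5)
The plan is a direct computation: the design \eqref{eq:reduced_gain_tp_blocks} has \(D=0\) and constant blocks, so every claim reduces to evaluating the linear-fractional map \eqref{eq:closed_loop_equivalent_impedance_alt}, one eigenvalue shift, one constant-matrix norm and one scalar inequality. Theorem~\ref{thm:finite_gain_vi_limit} supplies the benchmark value \(\gamma_{\rm VI}^{\star}=\varepsilon-\mu_0\).

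\textbf{Step 1 (return inverse and controlled impedance).} Since \(D=0\), we have \(I-DH=I\), so \(S=I\in\mathcal{RH}_\infty^{2\times2}\). This is consistent with Theorem~\ref{thm:stable_reconstruction}, since \(\det[I-DH]\equiv1\). The map \eqref{eq:closed_loop_equivalent_impedance_alt} then gives
\[
\widetilde Z_\rho=(A+BZ)C=(\varepsilon-\rho\mu_0)I+\rho Z .
\]
This lies in \(\mathcal{RH}_\infty^{2\times2}\) because \(Z\) does.

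\textbf{Step 2 (passivity margin).} For each \(\omega\in\Omega\),
\[
\operatorname{He}\{\widetilde Z_\rho(\mathrm j\omega)\}=(\varepsilon-\rho\mu_0)I+\rho\operatorname{He}\{Z(\mathrm j\omega)\}.
\]
Adding a multiple of the identity shifts every eigenvalue, and scaling by \(\rho>0\) preserves their ordering. Hence
\[
\lambda_{\min}\bigl[\operatorname{He}\{\widetilde Z_\rho(\mathrm j\omega)\}\bigr]=\varepsilon-\rho\mu_0+\rho\lambda_{\min}\bigl[\operatorname{He}\{Z(\mathrm j\omega)\}\bigr].
\]
Taking the infimum over \(\Omega\) and using \(\rho>0\), the infimum passes through the affine map and gives \(\varepsilon-\rho\mu_0+\rho\mu_0=\varepsilon\). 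Here \(\mu_0\) is finite because \(Z\in\mathcal{RH}_\infty\) is bounded on the imaginary axis.

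\textbf{Step 3 (coefficient gain).} The matrix in \eqref{eq:twoport_dimensionless_gain} is the constant
\[
M=\begin{bmatrix}aI & -(1-\rho)I\\ 0&0\end{bmatrix},\qquad a:=\varepsilon-\rho\mu_0 .
\]
Its \(\mathcal H_\infty\) norm is therefore its largest singular value. Computing \(MM^{\mathrm T}=\operatorname{diag}\bigl((a^2+(1-\rho)^2)I,\,0\bigr)\) gives \(\gamma_{\rm TP}=\sqrt{a^2+(1-\rho)^2}\), which is \eqref{eq:reduced_gain_tp_gain}.

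\textbf{Step 4 (gain comparison).} Both \(\gamma_{\rm TP}\) and \(\gamma_{\rm VI}^\star=\varepsilon-\mu_0\) are nonnegative, and \(\gamma_{\rm VI}^\star>0\) by the hypothesis \(\varepsilon>\mu_0\). So \(\gamma_{\rm TP}<\gamma_{\rm VI}^\star\) is equivalent to \(\gamma_{\rm TP}^2<(\varepsilon-\mu_0)^2\). Expanding both sides and cancelling \(\varepsilon^2\) yields
\[
(1-\rho)\bigl[2\varepsilon\mu_0-(1+\rho)\mu_0^2+(1-\rho)\bigr]<0 .
\]
This uses the factorizations \(\mu_0^2-\rho^2\mu_0^2=(1-\rho)(1+\rho)\mu_0^2\) and \(2\varepsilon\mu_0-2\varepsilon\rho\mu_0=2\varepsilon\mu_0(1-\rho)\). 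Dividing by \(1-\rho>0\) gives exactly \eqref{eq:strict_gain_condition}. Every step is reversible, so the equivalence holds in both directions.

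The only step needing care is this last algebraic reduction: the square must be taken only after confirming that both sides are nonnegative, and the factor \(1-\rho\) must be extracted with the correct sign so that the inequality direction is preserved. Steps 1 to 3 are routine.
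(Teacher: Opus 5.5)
Your proposal is correct and matches the paper's proof essentially step for step: $S=I$ from $D=0$, the shifted and scaled Hermitian part giving $\mu(\widetilde Z_\rho;\Omega)=\varepsilon$, the singular-value computation for the constant coefficient matrix, and the factorization of $\gamma_{\rm TP}^2-(\varepsilon-\mu_0)^2$ by $(1-\rho)$. Your explicit check that both gains are nonnegative before squaring is a small detail the paper leaves implicit.
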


\noindent\emph{Proof:} See Appendix~\ref{app:proof_reduced_gain_tp}.

Note that, since VI is the special
choice \((A,B,C,D)=(Z_v,I,I,0)\), every feasible VI design is also a feasible
two-port design, and therefore the two-port structure cannot perform worse than VI in terms of the minimum achievable coefficient gain.
Indeed, the boundary choice \(\rho=1\) in
\eqref{eq:reduced_gain_tp_blocks} recovers the minimum-gain VI realization
\(A=(\varepsilon-\mu_0)I\), \(B=C=I\), and \(D=0\). Theorem~\ref{thm:reduced_gain_tp} establishes the stronger
result: it identifies when a nondegenerate choice \(0<\rho<1\) achieves a strict improvement over the VI optimum.
Under the theorem assumptions,
\(\mu_0<\varepsilon\), and \eqref{eq:strict_gain_condition} is equivalent to
\begin{equation}
  \mu_0<
  \frac{\varepsilon-\sqrt{\varepsilon^2+1-\rho^2}}{1+\rho}.
\end{equation}
Thus, this construction improves on the optimal VI when the original
impedance has a sufficiently negative worst-case passivity margin. 
The advantage becomes especially transparent for a strongly nonpassive
original impedance. As
\(\mu_0\to-\infty\),
\begin{equation}
 \frac{\gamma_{\rm TP}}
 {\gamma_{\rm VI}^{\star}}
 =
 \frac{\sqrt{(\varepsilon-\rho\mu_0)^2+(1-\rho)^2}}
 {\varepsilon-\mu_0}
 \longrightarrow \rho<1.
 \label{eq:severe_shortage_gain_ratio}
\end{equation}
Hence \eqref{eq:strict_gain_condition} necessarily holds when the original
passivity shortage is sufficiently large, and the asymptotic coefficient
gain is only a fraction \(\rho\) of the minimum required by VI. Physically, the two-port does not rely entirely on an additional
current-dependent voltage drop to compensate for the original nonpassive
characteristic. By also changing how the original converter characteristic is
transferred to the external port, it can meet the same local
interaction-shaping objective with a smaller reconstruction effort.

\subsection{Case Study}
\label{subsec:case1_finite_gain}

The preceding analysis shows that VI and two-port control can achieve the
same pointwise passivity target, but with different coefficient gains. With
finite current-reference and converter-voltage limits, this gain difference
may determine whether the required commands can be applied and whether the
implemented terminal response remains consistent with its unlimited design.
To examine how this local implementation effect propagates through a
microgrid, consider the grid-connected system in
Fig.~\ref{fig:case1_network}, with VSG~1 at the PCC, a local constant-current
load, and VSG~2 connected through a feeder. The PCC is connected to the
utility through a second feeder. Both converters use continuous-time averaged
models with VSG outer loops, cascaded voltage/current PI loops,
filter-current dynamics, command limits, and anti-windup. VI or two-port
control is applied only to VSG~1 at the outer-loop interface in
Fig.~\ref{fig:case_common_model}; VSG~2 retains its conventional VSG control.

\begin{figure}[!htbp]
  \centering
  \includegraphics[width=0.63\columnwidth]{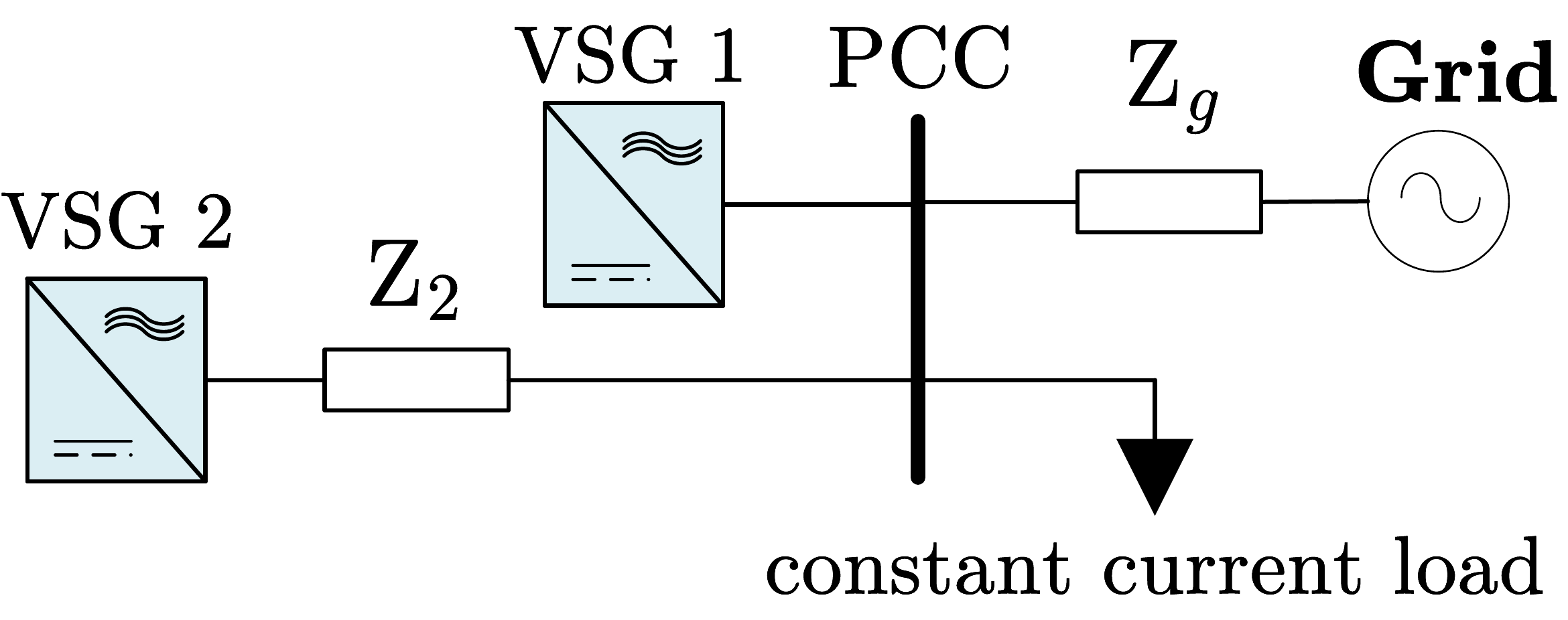}
  \caption{Grid-connected microgrid used in Application 1, with two VSGs and a local constant-current load.}
  \label{fig:case1_network}
\end{figure}

\begin{figure*}[!t]
  \centering
  \includegraphics[width=0.7\textwidth]{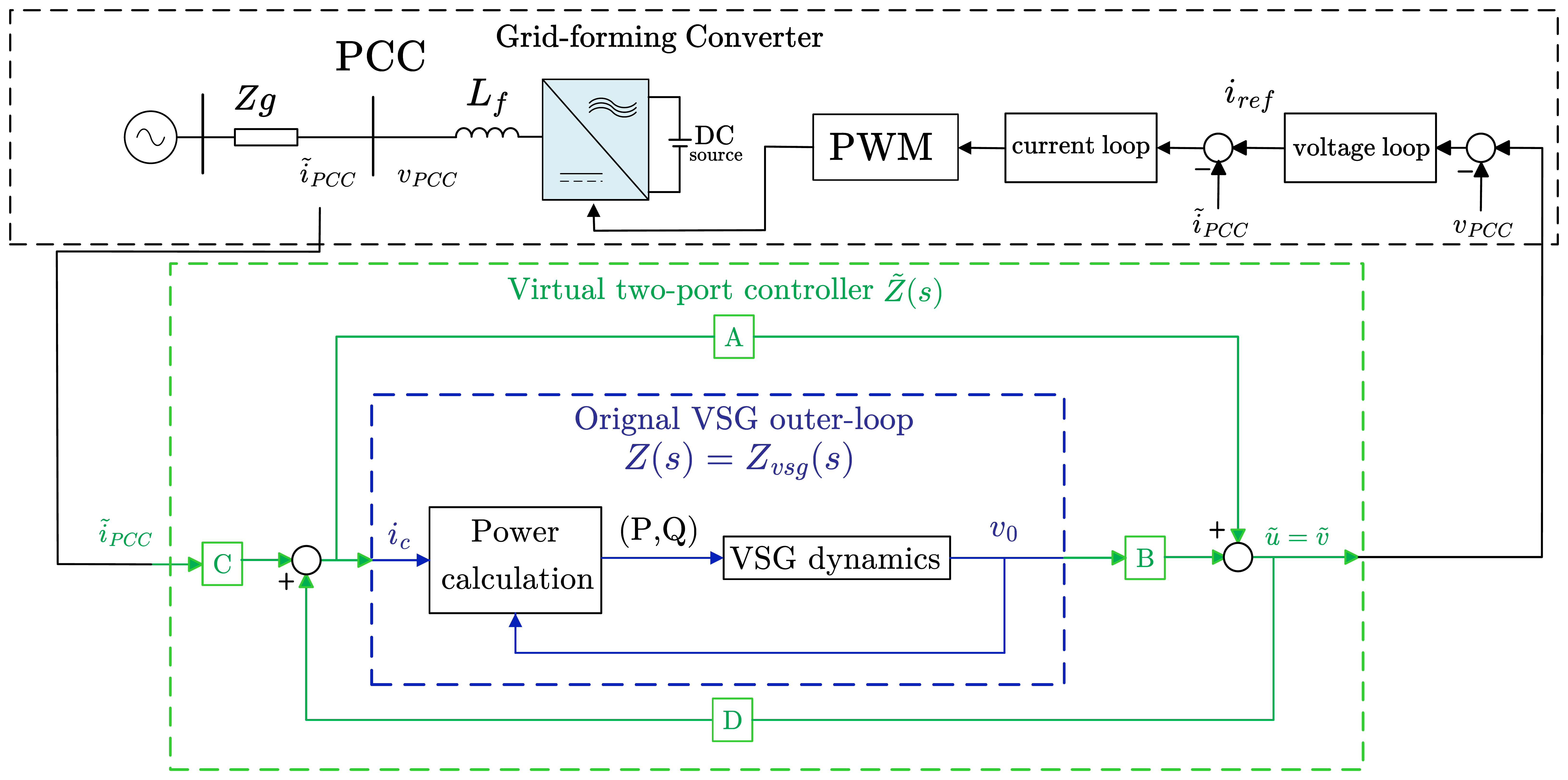}
  \caption{The virtual
  two-port control on a VSG.}
  \label{fig:case_common_model}
\end{figure*}

The VSG outer loop is linearized about the operating point and expressed in
the Laplace domain as
\begin{align*}
 (Js+D_\omega)\Delta\omega(s)
 &=\Delta P^\star(s)-\Delta P(s),\\
 s\Delta\theta(s)
 &=\Delta\omega(s),\\
 \Delta E_{\rm ref}(s)
 &=K_q\!\left[\Delta Q^\star(s)-\Delta Q(s)\right],\\
 \Delta v_0(s)
 &=e_{\theta0}\Delta E_{\rm ref}(s)
   +E_{{\rm ref},0}J_\theta e_{\theta0}\Delta\theta(s),
\end{align*}
where
\[
 e_{\theta0}
 =
 \begin{bmatrix}
 \cos\theta_0\\
 \sin\theta_0
 \end{bmatrix},
 \qquad
 J_\theta=
 \begin{bmatrix}
 0&-1\\
 1&0
 \end{bmatrix}.
\]
The incremental active and reactive powers are
\begin{align*}
 \Delta P(s)
 &=-i_{c0}^{T}\Delta v_0(s)-v_{00}^{T}\Delta i_c(s),\\
 \Delta Q(s)
 &= i_{c0}^{T}J_\theta\Delta v_0(s)
   -v_{00}^{T}J_\theta\Delta i_c(s).
\end{align*}
The inner loops with PI control are
\begin{equation*}
 \begin{aligned}
 \tilde v&=A i_c+Bv_0,&
 i_c&=C\tilde i_{\rm PCC}+D\tilde v,\\
 G_v(s)&=P_1+I_1/s,&G_i(s)&=P_2+I_2/s,\\
 \tilde i_{\rm PCC}^{\star}&=-G_v(\tilde v-v_{\rm PCC}),&
 L_f s\tilde i_{\rm PCC}
 &=G_i(\tilde i_{\rm PCC}^{\star}-\tilde i_{\rm PCC}).
 \end{aligned}
\end{equation*}

We retain \(f_\star=0.4\) Hz as a representative frequency with pronounced
VI saturation. The test compares command demand and its microgrid consequences
under the same pointwise target, \(\Omega=\{2\pi f_\star\}\).
Table~\ref{tab:case1_fixed_vsg} lists the common settings; VSG~2 uses the same
PI gains, filter, and limits, with its voltage/reactive-power references
initialized at the selected equilibrium.

\begin{table}[!htbp]
\centering
\caption{Application 1 operating condition and evaluation settings.}
\label{tab:case1_fixed_vsg}
\footnotesize
\setlength{\tabcolsep}{4pt}
\begin{tabular}{@{}ll@{}}
\toprule
Quantity & Value \\
\midrule
\((P_1,I_1)\), \((P_2,I_2)\) &
\((4.30,3.40\ {\rm s}^{-1})\), \((7.84,4.70\ {\rm s}^{-1})\) \\
\(L_f\), \(\omega_0\) &
\(3.54\times10^{-3}\) pu s, \(2\pi\times50\) rad/s \\
\(E_0\), \(K_q\), \(Q^\star\) & \(1.02\), \(0.011\), \(-0.12\) pu \\
\(P^\star\), \(J\), \(D_\omega\) &
\(0.45\) pu, \(0.48\), \(1.52\) \\
Required margin \(\varepsilon\) & \(2\) pu at \(0.4\) Hz \\
Current-reference and voltage limits & \(1.10\) pu, \(1.20\) pu \\
Feeder impedances \(Z_2=Z_g\) & \(0.188+\mathrm{j}0.516\) pu \\
VSG~2: \(P^\star,J,D_\omega\) & \(0.23417,1.0,0.30\) \\
Utility/PCC voltage magnitudes & \(0.96478/1.01872\) pu \\
VSG~2 terminal voltage magnitude & \(1.06556\) pu \\
Local load current \((d,q)\) & \((0.2204,0.0142)\) pu \\
\bottomrule
\end{tabular}
\end{table}

At this operating condition, the original pointwise passivity margin is $\mu_0= \lambda_{\min}\!\left( \operatorname{He}\{Z(\mathrm{j}2\pi f_\star)\}\right) =-6.79161$. We choose \(\varepsilon=2\) as an illustrative positive passivity target
for both controllers.
By Theorem~\ref{thm:finite_gain_vi_limit}, the minimum VI gain for the prescribed margin is $\gamma_{\rm VI}^{\star}=2-\mu_0=8.79161$.
For the two-port control, we use \(\rho=0.29\). Substitution of
\(\varepsilon=2\), \(\mu_0=-6.79161\), and \(\rho=0.29\) into
\eqref{eq:reduced_gain_tp_blocks} gives $A=3.969568I, B=0.29I, C=I, D=0.$
Direct evaluation gives 
\(\gamma_{\rm TP}=4.03256<\gamma_{\rm VI}^{\star}=8.79161\).

\par\addvspace{4pt}
\noindent\begin{minipage}{\columnwidth}
\makeatletter\def\@captype{figure}\makeatother
\centering
\includegraphics[width=0.92\columnwidth]
{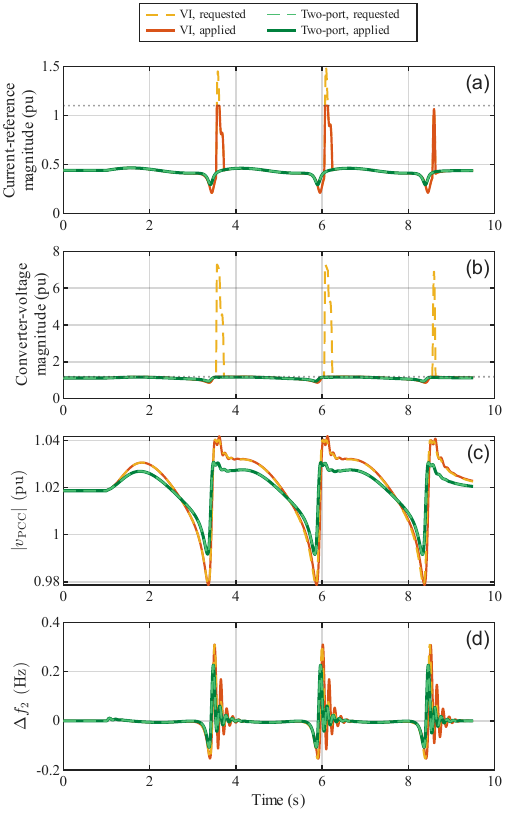}
\caption{Two-VSG microgrid test at \(0.4\) Hz. In (a), (b), dashed
and solid curves denote the requested and applied VSG~1 commands,
respectively, with dotted limit levels. In (c), (d), dashed and solid curves
denote the unlimited and limited PCC-voltage and VSG~2-frequency responses,
respectively. Orange: VI; green: two-port.}
\label{fig:case1_finite_gain_waveform}
\end{minipage}
\par\addvspace{4pt}

A \(0.03\)-pu, \(0.4\)-Hz sinusoidal measurement disturbance is applied
to the \(d\)-axis current reconstruction channel of VSG~1 from \(t=1\) to
\(8.5\) s. The network, operating point, disturbance, and hardware limits
are identical for VI and two-port control. For each strategy, a companion
simulation removes only VSG~1's current-reference and voltage limits, so the
paired responses isolate the effect of these limits.

Fig.~\ref{fig:case1_finite_gain_waveform}(a), (b) shows that VI requests peak
current-reference and converter-voltage magnitudes of \(1.48809\) and
\(7.28267\) pu. Their limits are active for \(1.67\%\) and \(5.09\%\)
of the disturbance interval. Two-port peaks are \(0.46658\) and
\(1.17846\) pu, so neither limit is activated. VSG~2 remains unsaturated
under both strategies. Panels (c), (d) show the corresponding
PCC-voltage and VSG~2-frequency responses. The limited and unlimited VI
trajectories separate by up to \(0.01684\) pu and \(0.16957\) Hz,
respectively, whereas the two-port trajectories coincide with their unlimited
counterparts. Moreover, under the same disturbance, two-port control produces
smaller system-level excursions: the minimum PCC voltage is \(0.99168\) pu,
compared with \(0.97874\) pu for VI, while the peak absolute VSG~2-frequency
deviation is reduced from \(0.30823\) Hz under VI to \(0.22592\) Hz.
Thus, VI saturation alters the controlled terminal response and propagates
through the microgrid to the neighboring VSG. By avoiding saturation, the
two-port implementation preserves its intended response and, under the tested
condition, produces smaller PCC-voltage and neighboring-VSG frequency
excursions than VI.

\FloatBarrier
\section{Application 2: Uncertainty Compression}
\label{sec:impedance_shaping_application}

\subsection{Design Objective}

In microgrid operation, variations in converter parameters and implementation
can cause the terminal characteristic presented to the network to deviate
from its nominal behavior \cite{Ying2025RobustPassive,Cecati2025Interoperability}.
The network therefore interacts not with one fixed terminal model, but
potentially with an uncertainty family of terminal impedances. An important question is whether one fixed terminal controller
can reduce how strongly such internal uncertainty is reflected in an
externally observed interaction characteristic.

To quantify how the uncertainty is reflected at the terminal, we use the
passivity-margin interval generated by the impedance family as a
representative scalar domain. Its width provides a stability-relevant measure
of the uncertainty-induced variation and its compression.

As the simplest representative case, we consider a scalar uncertain control
parameter \(p\) that varies continuously over the closed interval
\(\mathcal P=[p_-,p_+]\), and define
\begin{equation}
 \eta(\omega,p):=
 \lambda_{\min}\!\left[
 \operatorname{He}\{Z(\mathrm{j}\omega,p)\}
 \right].
 \label{eq:pointwise_passivity_index}
\end{equation}
The quantity \(\eta(\omega,p)\) is the pointwise passivity margin; a
nonnegative value means that the member indexed by \(p\) is passive at
frequency \(\omega\).
At each frequency, the parameter interval produces
\begin{equation}
 \mathcal I_\mu(\omega)
 :=
 \left[
 \underline\eta(\omega),\overline\eta(\omega)
 \right]
 =
 \left[
 \inf_{p\in\mathcal P}\eta(\omega,p),
 \sup_{p\in\mathcal P}\eta(\omega,p)
 \right].
 \label{eq:passivity_margin_interval}
\end{equation}
For each frequency, the uncertain impedance family is mapped through
\(\eta(\omega,p)\) into the interval \(\mathcal I_\mu(\omega)\). The width of
this interval therefore measures how strongly the underlying parameter
uncertainty is reflected in the selected terminal-interaction measure. To
quantify this uncertainty-induced spread over \(\Omega\), define
\begin{equation}
 \Delta_\mu(\mathcal Z;\Omega)
 :=
 \sup_{\omega\in\Omega}
 \left[
 \overline\eta(\omega)-\underline\eta(\omega)
 \right].
 \label{eq:passivity_margin_spread}
\end{equation}
Here, \(\Delta_\mu\) is used primarily as a measure of the uncertainty-induced
spread in the selected terminal characteristic. A smaller value means that
the observable is less sensitive to the underlying parameter uncertainty.

\subsection{VI Control: Translation Without Compression}

For VI control, the same \(Z_v\) is added to every member of the
impedance family, so all pairwise impedance differences are preserved. This directly leads to the following theorem on this additive translation property.

\begin{theorem}[VI translation and uncertainty-domain invariance]
\label{thm:vi_family_translation}
Let \(Z(\cdot,p)\in\mathcal{RH}_{\infty}^{2\times2}\) for
\(p\in\mathcal P\), and let \(Z_v\in\mathcal{RH}_{\infty}^{2\times2}\) be
independent of \(p\). For a desired lower bound \(m>0\), if
\begin{equation}
  \operatorname{He}\{Z_v(\mathrm{j}\omega)\}
  \succeq
  [m-\underline\eta(\omega)]I,
  \qquad \omega\in\Omega,
  \label{eq:general_vi_passivation_condition}
\end{equation}
then
\(
  \eta_{\rm VI}(\omega,p)\geq m.
\)
Moreover, the additive map preserves every pairwise impedance difference:
\begin{equation}
 \widetilde Z_{\rm VI}(\cdot,p_1)-
 \widetilde Z_{\rm VI}(\cdot,p_2)
 =Z(\cdot,p_1)-Z(\cdot,p_2).
 \label{eq:general_vi_pairwise_invariance}
\end{equation}
If, in addition, the VI is isotropic, i.e.,
\begin{equation}
  \operatorname{He}\{Z_v(\mathrm{j}\omega)\}
  =r_v(\omega)I,
  \qquad r_v(\omega)\in\mathbb R,
  \quad \omega\in\Omega,
  \label{eq:isotropic_vi_definition}
\end{equation}
then every family member receives the same scalar margin shift and $\Delta_\mu(\widetilde{\mathcal Z}_{\rm VI};\Omega) = \Delta_\mu(\mathcal Z;\Omega)$.
\end{theorem}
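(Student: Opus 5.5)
The plan is to work pointwise in frequency and use only linearity of the Hermitian part together with Weyl's inequality for eigenvalues of Hermitian matrices. For VI control, Theorem~\ref{thm:stable_reconstruction} is not needed in any essential way: with \((A,B,C,D)=(Z_v,I,I,0)\) we have \(D=0\), so \(S=I\). The controlled family is therefore \(\widetilde Z_{\rm VI}(\cdot,p)=Z(\cdot,p)+Z_v\in\mathcal{RH}_\infty^{2\times2}\) for every \(p\in\mathcal P\).

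\textbf{Lower bound.} Fix \(\omega\in\Omega\) and \(p\in\mathcal P\). Since \(\operatorname{He}\) is real-linear,
\[
\operatorname{He}\{\widetilde Z_{\rm VI}(\mathrm j\omega,p)\}=\operatorname{He}\{Z(\mathrm j\omega,p)\}+\operatorname{He}\{Z_v(\mathrm j\omega)\}.
\]
By definition, \(\operatorname{He}\{Z(\mathrm j\omega,p)\}\succeq \eta(\omega,p)I\succeq \underline\eta(\omega)I\). Adding hypothesis \eqref{eq:general_vi_passivation_condition} gives
\[
\operatorname{He}\{\widetilde Z_{\rm VI}(\mathrm j\omega,p)\}\succeq mI,
\]
that is, \(\eta_{\rm VI}(\omega,p)\geq m\). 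One point needs care: \(\underline\eta(\omega)\) must be finite. This holds because \(\eta(\omega,\cdot)\) is bounded on \(\mathcal P\), as the matrices are finite for each \(p\). I will note this briefly.

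\textbf{Pairwise invariance.} Identity \eqref{eq:general_vi_pairwise_invariance} follows by direct subtraction: \(Z_v\) does not depend on \(p\), so it cancels.

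\textbf{Spread invariance.} Under \eqref{eq:isotropic_vi_definition}, the VI adds a scalar multiple of the identity to the Hermitian part, and \(\lambda_{\min}(X+rI)=\lambda_{\min}(X)+r\) for every Hermitian \(X\) and real \(r\). Hence
\[
\eta_{\rm VI}(\omega,p)=\eta(\omega,p)+r_v(\omega)
\]
for all \(p\), so every family member receives the same scalar shift. Taking \(\inf_p\) and \(\sup_p\), the shift \(r_v(\omega)\) is independent of \(p\) and passes through both:
\[
\overline\eta_{\rm VI}(\omega)-\underline\eta_{\rm VI}(\omega)=\overline\eta(\omega)-\underline\eta(\omega)
\]
for each \(\omega\). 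Taking \(\sup_{\omega\in\Omega}\) in \eqref{eq:passivity_margin_spread} then gives \(\Delta_\mu(\widetilde{\mathcal Z}_{\rm VI};\Omega)=\Delta_\mu(\mathcal Z;\Omega)\).

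\textbf{Main obstacle.} The proof is elementary, so the only real subtlety is the isotropy hypothesis. Without it, the non-scalar Hermitian shift interacts with the eigenvectors of \(\operatorname{He}\{Z\}\), which vary with \(p\). Weyl's inequality then bounds the change in \(\eta\) only to within \([\lambda_{\min},\lambda_{\max}]\) of \(\operatorname{He}\{Z_v\}\), and the interval width need not be preserved. I will make explicit that isotropy is exactly what turns this inequality into an equality. The pairwise statement \eqref{eq:general_vi_pairwise_invariance} still holds without isotropy, which shows that VI cannot compress the family in the impedance domain in any case.
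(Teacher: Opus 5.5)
Your proposal is correct and follows the paper's proof essentially step for step. It combines $\operatorname{He}\{Z(\mathrm{j}\omega,p)\}\succeq\underline\eta(\omega)I$ with the hypothesis to obtain the bound $m$, cancels $Z_v$ in pairwise differences, and under isotropy uses $\eta_{\rm VI}=\eta+r_v(\omega)$, which shifts $\inf_p$ and $\sup_p$ equally. The one slip is your claim that $\underline\eta(\omega)$ is finite because each $Z(\mathrm{j}\omega,p)$ is finite; that does not bound $\eta(\omega,\cdot)$ uniformly over $\mathcal P$. Finiteness instead follows from the hypothesis itself, since the finite matrix $\operatorname{He}\{Z_v(\mathrm{j}\omega)\}$ cannot dominate $[m-\underline\eta(\omega)]I$ otherwise.
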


\noindent\emph{Proof:} See Appendix~\ref{app:proof_vi_family_translation}.

This theorem reveals the VI-side limitation in uncertainty management. A fixed
VI can translate the entire family, but it preserves all pairwise impedance
differences. Under isotropic VI, this translation also leaves
\(\Delta_\mu\) unchanged. Hence VI can change the location of the selected
uncertainty domain but cannot reduce its spread.

\subsection{Two-Port Design: Uncertainty-Domain Compression}

We next ask whether the virtual two-port can reduce the spread of the same
uncertainty-induced domain rather than merely translate it. Its
linear-fractional map provides this additional capability by allowing both the
location and the diameter of the mapped domain to be reshaped, as stated by
the following theorem.

\begin{theorem}[Uncertainty-domain compression by virtual two-port control]
\label{thm:cayley_robust_passivation}
Let \(Z(\cdot,p)\in\mathcal{RH}_{\infty}^{2\times2}\) for
\(p\in\mathcal P\), and suppose that for some \(p_c\in\mathcal P\),
\[
 \sup_{p\in\mathcal P}
 \|Z(\cdot,p)-Z(\cdot,p_c)\|_\infty<\infty.
\]
For every lower bound \(m>0\) and diameter bound
\(\overline\Delta>0\), there exists a two-port control such that the
reconstructed terminal relation is stable and proper and, for all
\(p\in\mathcal P\) and \(\omega\in\Omega\), satisfies $ m \leq \eta_{\rm TP}(\omega,p) \leq m+\overline\Delta$,
and therefore $\Delta_\mu(\widetilde{\mathcal Z}_{\rm TP};\Omega) \leq \overline\Delta$.
\end{theorem}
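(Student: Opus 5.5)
The plan is to reuse the scaling construction of Theorem~\ref{thm:reduced_gain_tp}, with $D=0$, so that no return poles arise. The observation behind it is that the affine map $Z\mapsto aI+\rho Z$ translates the pointwise passivity margin by $a$ and scales it by $\rho$. Scaling, unlike the pure translation of Theorem~\ref{thm:vi_family_translation}, shrinks the width of the margin interval. Concretely, I would choose
\[
 A=aI,\qquad B=\rho I,\qquad C=I,\qquad D=0,
\]
with a scalar $a\in\mathbb R$ and $0<\rho\le 1$ to be fixed below. Since $D=0$, we get $S=I$, and the determinant conditions of Theorem~\ref{thm:stable_reconstruction} hold trivially. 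Hence $\widetilde Z_{\rm TP}(\cdot,p)=aI+\rho Z(\cdot,p)\in\mathcal{RH}_\infty^{2\times2}$ for every $p\in\mathcal P$, which gives stability and properness.

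\textbf{Step 1 (a uniform bound on $\eta$).} Let $K:=\sup_{p\in\mathcal P}\|Z(\cdot,p)-Z(\cdot,p_c)\|_\infty<\infty$ and $M:=\|Z(\cdot,p_c)\|_\infty+K$. For any matrix $X$, $|\lambda_{\min}(\operatorname{He}\{X\})|\le\|\operatorname{He}\{X\}\|\le\|X\|$. Weyl's inequality gives $|\lambda_{\min}(\operatorname{He}\{X\})-\lambda_{\min}(\operatorname{He}\{Y\})|\le\|X-Y\|$. Applying both at $s=\mathrm j\omega$ yields $|\eta(\omega,p)|\le M$ for all $\omega\ge0$ and $p\in\mathcal P$, hence in particular on $\Omega$.

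This uniform bound is the only real obstacle. Membership of each $Z(\cdot,p)$ in $\mathcal{RH}_\infty$ alone would not bound $\eta$ uniformly over the family. That is exactly why the hypothesis on $\sup_p\|Z(\cdot,p)-Z(\cdot,p_c)\|_\infty$ is imposed: it converts the pointwise bound into a bound uniform in $p$.

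\textbf{Step 2 (margin of the reconstruction).} Since $a$ and $\rho>0$ are real scalars,
\[
\operatorname{He}\{aI+\rho Z\}=aI+\rho\operatorname{He}\{Z\},
\]
so $\eta_{\rm TP}(\omega,p)=a+\rho\,\eta(\omega,p)$ exactly.

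\textbf{Step 3 (choice of parameters).} If $M=0$, take $\rho=1$ and $a=m$. Otherwise take $\rho:=\min\{1,\overline\Delta/(2M)\}$ and $a:=m+\rho M$. By Step~1, $\eta_{\rm TP}(\omega,p)\in[a-\rho M,\,a+\rho M]=[m,\,m+2\rho M]\subseteq[m,\,m+\overline\Delta]$ for all $p\in\mathcal P$ and $\omega\in\Omega$. Consequently, for each $\omega$, $\overline\eta_{\rm TP}(\omega)-\underline\eta_{\rm TP}(\omega)\le\overline\Delta$. Taking the supremum over $\Omega$ gives $\Delta_\mu(\widetilde{\mathcal Z}_{\rm TP};\Omega)\le\overline\Delta$, which completes the argument.

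As a remark, the proof actually yields the sharper bound $\Delta_\mu(\widetilde{\mathcal Z}_{\rm TP};\Omega)=\rho\,\Delta_\mu(\mathcal Z;\Omega)$. This makes the contrast with the VI invariance of Theorem~\ref{thm:vi_family_translation} explicit.
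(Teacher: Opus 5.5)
Your proof is correct, but it takes a genuinely different route from the paper.

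The paper begins with the same kind of centered affine normalization. It chooses $A=h(I-Z_c/\alpha)$ and $B=(h/\alpha)I$, so that $H=A+BZ=h(I+Q)$ with $Q=(Z-Z_c)/\alpha$ and $\|Q\|_\infty\le\rho<1$. It then closes the return channel with $D=I/(2h)$ and $C=z_sI/(2h)$, which produces the Cayley map $\widetilde Z_{\rm TP}=z_s(I+Q)(I-Q)^{-1}$. Stability rests on small-gain invertibility of $I-Q$ through Theorem~\ref{thm:stable_reconstruction}. The two bounds come from the identity $\operatorname{He}\{T(Q)\}=(I-Q)^{-\mathrm H}(I-Q^{\mathrm H}Q)(I-Q)^{-1}$, which gives $\operatorname{He}\{\widetilde Z_{\rm TP}\}\succeq z_s\frac{1-\rho}{1+\rho}I=mI$, together with $\bar\sigma(\widetilde Z_{\rm TP})\le z_s\frac{1+\rho}{1-\rho}=m+\overline\Delta$.

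Your $D=0$ construction differs in several ways:
\begin{itemize}
\item It avoids the return inverse entirely.
\item It makes the margin map exactly affine, $\eta_{\rm TP}=a+\rho\,\eta$.
\item It gives the exact identity $\Delta_\mu(\widetilde{\mathcal Z}_{\rm TP};\Omega)=\rho\,\Delta_\mu(\mathcal Z;\Omega)$.
\item It shows that the compression claimed in the theorem is already available in the affine subclass, through the scaling channel $B$. This is the same mechanism as Theorem~\ref{thm:reduced_gain_tp}, so the theorem as stated does not actually require the linear-fractional freedom $D\neq0$.
\end{itemize}

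The paper's route buys three things:
\begin{itemize}
\item It demonstrates the $D\neq0$ freedom the paper wants to showcase.
\item Its bounds depend only on the deviation radius $\delta_Z$, not on $\|Z_c\|_\infty+K$. You could get the same dependence by taking $A=aI-\rho Z_c$, but you would lose the exact affine relation between $\eta_{\rm TP}$ and $\eta$.
\item It provides a free realization scale $h$ that leaves $\widetilde Z_{\rm TP}$ unchanged. The paper uses $h$ in the case study to minimize the coefficient gain $\gamma_{\rm TP}$.
\end{itemize}
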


\noindent\emph{Proof:} See Appendix~\ref{app:proof_cayley_robust_passivation}.

Theorem~\ref{thm:cayley_robust_passivation} separates the location and width
of the uncertainty-induced domain. The parameter \(m\) sets its lower
location, whereas \(\overline\Delta\) bounds its dispersion across the
uncertain family. In particular, choosing \(m>0\) places the entire compressed
domain in the passive region, while \(\overline\Delta\) can be selected
independently to limit its width.

The structural distinction from VI is therefore not whether the uncertain
family can simply be shifted to a desired region, but whether its
uncertainty-induced spread can be reduced. Fixed isotropic VI preserves this
spread, whereas the virtual two-port can make it arbitrarily small under the
conditions of the theorem. In this sense, the two-port provides an additional
means of managing how strongly internal converter uncertainty is manifested
in the selected network-facing interaction characteristic.

\subsection{Case Study}
\label{subsec:case2_continuous_compression}

The same VSG structure as in Application~1 is now evaluated over a continuous
parameter family to evaluate the uncertain terminal characteristic that
a converter presents to the microgrid. This local frequency-domain test
quantifies compression of that characteristic. We set \(P^\star=0.45\) pu, \(J=0.6\), and the nominal
damping coefficient \(D_{\omega0}=1.9\), while the actual coefficient varies
according to
\begin{equation}
 p:=\frac{D_\omega}{D_{\omega0}}\in[0.8,1.2].
 \label{eq:case2_continuous_parameter}
\end{equation}
The nominal member \(p_c=1\) defines the reference impedance
\(Z_c:=Z(\cdot,p_c)\), and the interaction band is
\(f\in[0.01,0.5]\) Hz. The fixed VI and fixed two-port are then applied to
every member of this family.

Over the considered frequency band, the original family has a
passivity-margin spread of \(\Delta_\mu=67.436\) pu. For VI control, we use \(Z_v=237.171I\) to ensure positive passivity
margins for the entire uncertainty family over the considered frequency band. The fixed VI raises the minimum passivity margin to \(0.020\) pu
while leaving \(\Delta_\mu\) unchanged at \(67.436\) pu.

For the two-port control, we select \(m=4/3\) and
\(\overline\Delta=5/3\). Applying the proposed design yields
\begin{equation}
 \begin{aligned}
 A&=0.546835I-1.01295\times10^{-3}Z_c,\\
 B&=1.01295\times10^{-3}I,\\
 C&=1.828704I,\qquad D=0.914352I.
 \end{aligned}
 \label{eq:case2_final_abcd}
\end{equation}
The two-port achieves a minimum passivity margin of \(1.679\) pu and reduces
\(\Delta_\mu\) to \(0.321\) pu, below the prescribed bound
\(\overline\Delta=1.667\) pu.

It is also informative to evaluate the controller coefficients using the same
gain measure introduced in Application~1. For VI,
\begin{equation}
 \gamma_{\rm VI}
 =\|Z_v\|_\infty
 =237.17105,
 \label{eq:case2_vi_gain}
\end{equation}
whereas the two-port realization gives
\begin{equation}
 \gamma_{\rm TP}
 =
 \left\|
 \begin{bmatrix}
 A&B-I\\
 C-I&D
 \end{bmatrix}
 \right\|_\infty
 =1.71344.
 \label{eq:case2_tp_gain}
\end{equation}
Thus, the two-port combines substantial uncertainty-domain compression with a
considerably smaller coefficient gain than VI.

\begin{figure}[!b]
\centering
\includegraphics[width=0.80\columnwidth]
{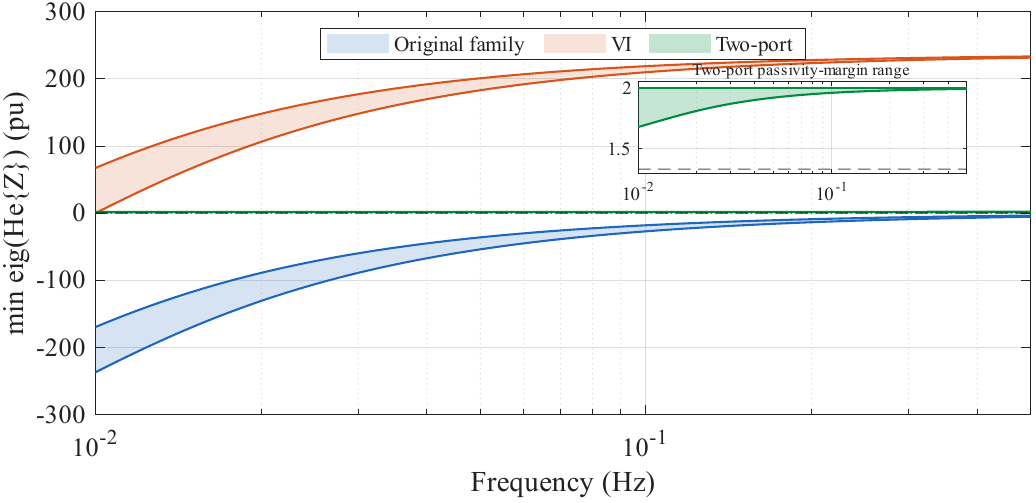}
\caption{Uncertainty-compression results for
\(D_\omega/D_{\omega0}\in[0.8,1.2]\). The fixed isotropic VI translates the
uncertainty-induced margin domain without changing its diameter, whereas the
virtual two-port substantially compresses the same domain while retaining a
positive lower margin.}
\label{fig:application2_continuous_compression}
\end{figure}

As shown in Fig.~\ref{fig:application2_continuous_compression}, the two
controllers treat the same uncertainty family in fundamentally different
ways. VI translates the selected terminal-characteristic domain while
preserving its uncertainty-induced spread, whereas the virtual two-port
reduces the diameter from \(67.436\) pu to \(0.321\) pu. Thus, the terminal
interaction measure becomes substantially less sensitive to the underlying
converter uncertainty. Together with the much smaller coefficient gain, this
result shows how local reconstruction reduces uncertainty in the
converter characteristic used for microgrid interaction analysis.

To assess how this local uncertainty compression propagates to network-level responses, Appendix~\ref{app:ieee33_network_validation}-A further evaluates the proposed mechanism on a modified IEEE 33-bus microgrid with eight VSGs.

\section{Application 3: Flexible Power-Flow Regulation}
\label{subsec:nonnegative_resistance_capability}

\subsection{Design Objective}

Beyond the dynamic terminal shaping in Applications~1 and~2, we now use
the same framework for steady-state coordination. In microgrids with high
\(R/X\) ratios or unequal feeder
impedances, identical converters can exhibit inaccurate power sharing. VI is
widely used to compensate this problem by reshaping the effective series
impedance. However, stronger
sharing correction generally requires a larger series voltage drop, which
limits the flexibility of VI-based power-flow regulation
. This application investigates whether the additional
degree of freedom in virtual two-port control can overcome this limitation.

\subsection{VI Control: Series-Only Compensation}

To compare the regulation capabilities, consider the common single-source
circuit in Fig.~\ref{fig:positive_pi_network}. The general reciprocal
\(\Pi\)-type realization contains a converter-side shunt admittance \(y_1\),
a series admittance \(y_2\), and a line-side shunt admittance \(y_3\).

\begin{figure}[!htbp]
  \centering
  \includegraphics[width=0.82\columnwidth]{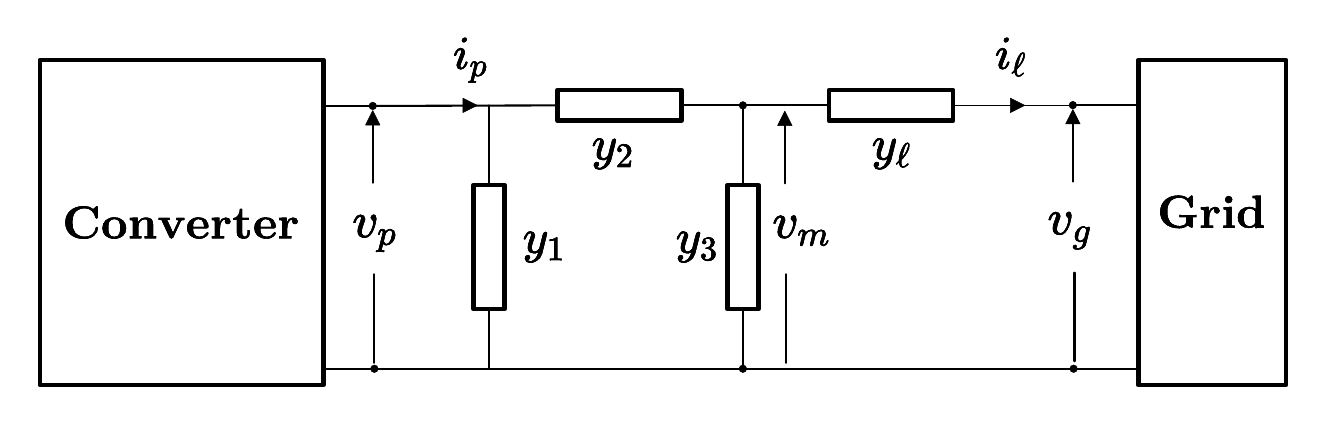}
  \caption{Reciprocal \(\Pi\)-type terminal realization connected to the
  physical line.}
  \label{fig:positive_pi_network}
\end{figure}

Here, \(v_p\), \(v_m\), and \(v_g\) are the converter-port,
controlled-terminal, and grid voltages, respectively, and \(y_\ell\) is the
physical-line admittance. For fixed \(v_g\) and \(y_\ell\), each \(v_m\)
determines the line current \(i_\ell=y_\ell(v_m-v_g)\) and hence a
power-flow operating point. The attainable set of \(v_m\) therefore
quantifies the power-flow regulation capability.

For VI, the two shunt branches are absent, i.e., \(y_1=y_3=0\), and the only
compensating element is the series impedance \(z_v=y_2^{-1}\). It can change
\(v_m\) only through a series voltage drop. For a
prescribed \(v_m\neq v_g\), the circuit equation uniquely determines the
required finite impedance as
\begin{equation}
 z_v=z_\ell\frac{v_p-v_m}{v_m-v_g},\qquad z_\ell=y_\ell^{-1}.
 \label{eq:vi_required_impedance}
\end{equation}
Its controlled-voltage set is
\begin{equation}
 \mathcal V_{\rm VI}
 =\left\{v_m\in\mathbb C\setminus\{v_g\}\;\middle|\;
 \operatorname{Re}\!\left[
 z_\ell(v_p-v_m)(v_m-v_g)^*
 \right]\geq0\right\}.
 \label{eq:vi_vm_half_plane_condition}
\end{equation}
The point \(v_m=v_p\) is included with \(z_v=0\). Hence
\eqref{eq:vi_vm_half_plane_condition} is not a tuning rule but a reachability
condition: points that violate it would require a negative-resistance series
branch. The excluded region is the structural benchmark for the two-port.

\subsection{Two-Port Control: Series-Shunt Compensation}

For the two-port considered here, both the series admittance \(y_2\) and the
line-side shunt admittance \(y_3\) in
Fig.~\ref{fig:positive_pi_network} participate in power-flow regulation. The
series branch provides series compensation through the voltage difference
\(v_m-v_p\), while the shunt branch provides parallel compensation directly
at \(v_m\). Their coordinated action extends the regulation capability beyond
that of series-only VI.

 Under the reciprocal-network realization, the controlled-terminal voltage $v_m$ satisfies
\begin{equation}
 y_2(v_m-v_p)+y_3v_m+y_\ell(v_m-v_g)=0.
 \label{eq:positive_pi_node_equation}
\end{equation}
For a prescribed $v_m$, \eqref{eq:positive_pi_node_equation} is an inverse
feasibility problem in $y_2$ and $y_3$. Two cases need to be distinguished.
If $v_m$ is not collinear with $v_p$, the two adjustable admittances provide
sufficient freedom to satisfy \eqref{eq:positive_pi_node_equation} with
positive conductances. If $v_m$ is collinear with $v_p$, additional
conditions are required. The following theorem gives the exact feasibility
conditions for both cases.

\begin{theorem}[Positive-resistance series-shunt voltage capability]
\label{thm:nonnegative_resistance_vm_reachability}
Assume $v_p\neq0$, $v_g\neq0$, $v_p\neq v_g$, and $y_\ell\neq0$. Define
\begin{equation*}
 w=\frac{v_m}{v_p},\qquad
 \eta_g=\frac{v_g}{v_p},\qquad
 \phi(t)=\operatorname{Re}\{y_\ell(\eta_g-t)\}.
\end{equation*}
With finite $y_2,y_3\in\mathbb C$ such that
$\operatorname{Re}(y_2),\operatorname{Re}(y_3)>0$
and $y_2+y_3+y_\ell\neq 0$, the normalized controlled-voltage set is
\begin{equation}
 \begin{aligned}
 \mathcal W_\Pi^+=
 &(\mathbb C\setminus\mathbb R)\cup(0,1)\\
 &\cup\{t\leq0:\phi(t)<0\}
 \cup\{t\geq1:\phi(t)>0\}.
 \end{aligned}
 \label{eq:vm_nonnegative_resistance_characterization}
\end{equation}
Hence every $v_m$ away from the line through $0$ and $v_p$ is reachable;
the sign conditions in
\eqref{eq:vm_nonnegative_resistance_characterization} characterize the
remaining collinear points.
\end{theorem}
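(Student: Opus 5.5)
The plan is to reduce the reachability question to a membership problem for a sum of two open half-planes in $\mathbb C$. Dividing \eqref{eq:positive_pi_node_equation} by $v_p\neq0$ and using $w=v_m/v_p$, $\eta_g=v_g/v_p$, a prescribed $v_m$ is reachable exactly when there exist $y_2,y_3$ with $\operatorname{Re}(y_2),\operatorname{Re}(y_3)>0$, $y_2+y_3+y_\ell\neq0$, and
\begin{equation*}
 y_2(w-1)+y_3w=y_\ell(\eta_g-w).
\end{equation*}
The condition $y_2+y_3+y_\ell\neq0$ is what makes $v_m$ the unique solution of the node equation for those admittances. So I would first characterize the cone
\begin{equation*}
 \mathcal K(w):=\{y_2(w-1)+y_3w:\ \operatorname{Re}y_2>0,\ \operatorname{Re}y_3>0\}.
\end{equation*}
Afterwards I would show that the nondegeneracy constraint can always be met. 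Write $\mathbb H:=\{z:\operatorname{Re}z>0\}$. For $a\neq0$ the set $a\mathbb H$ is an open half-plane whose boundary line passes through the origin in direction $\mathrm{j}a$. Hence $\mathcal K(w)=(w-1)\mathbb H+w\mathbb H$, with the convention $0\cdot\mathbb H=\{0\}$.

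The case split follows the geometry of this sum.

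\emph{Case $w\notin\mathbb R$.} Then $w$ and $w-1$ are not real multiples of each other, so the two boundary lines are distinct. The sum of two open half-planes with distinct boundary lines through $0$ is all of $\mathbb C$, because the sum contains a neighborhood of every direction. Hence every such $w$ is reachable.

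\emph{Case $w\in(0,1)$.} Here $w-1<0<w$, so the two half-planes are opposite: $\mathcal K=-\mathbb H+\mathbb H=\mathbb C$. Again every such $w$ is reachable.

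\emph{Case $w\geq1$.} Both coefficients are nonnegative real and $w>0$, so $\mathcal K=\mathbb H$. The term with $w-1=0$ contributes only $\{0\}$. Membership of $y_\ell(\eta_g-w)$ in $\mathbb H$ is exactly $\phi(w)>0$.

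\emph{Case $w\leq0$.} Now $w-1<0$ and $w\leq0$, so $\mathcal K=-\mathbb H$. The condition becomes $\phi(w)<0$.

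In the two real half-plane cases the set is open, so the target point $0$ (arising when $w=\eta_g$) is correctly excluded by the strict inequalities. Collecting the four cases gives \eqref{eq:vm_nonnegative_resistance_characterization}.

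The remaining step, which I expect to be the only delicate point, is the side condition $y_2+y_3+y_\ell\neq0$. For fixed feasible $w$, the set of admissible pairs $(y_2,y_3)$ is the intersection of the open set $\mathbb H\times\mathbb H\subset\mathbb C^2$ with a complex affine line (one linear equation, nontrivial since $w,w-1$ are not both zero). This intersection is nonempty and open in that line. The extra constraint removes at most a single point of the line, unless the line lies entirely inside $\{y_2+y_3=-y_\ell\}$. That happens only if $(w-1,w)$ is proportional to $(1,1)$, which is impossible. So a nonempty open subset of a line minus one point is still nonempty, and a valid pair exists.

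Conversely, any admissible pair produces a $w$ satisfying the displayed equation, hence lying in $\mathcal K(w)$ as characterized. This gives both inclusions. The final sentence of the theorem is then just the observation that $\mathbb C\setminus\mathbb R$ corresponds to $v_m$ off the line through $0$ and $v_p$.
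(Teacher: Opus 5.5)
Your proof is correct and uses the same case split as the paper (noncollinear $w$, $0<w<1$, $w\le 0$, $w\ge 1$). The Minkowski sum $(w-1)\mathbb{H}+w\mathbb{H}$ is a coordinate-free form of the paper's split $y_k=g_k+\mathrm{j}b_k$, and the paper's step of fixing positive conductances and then solving a nonsingular real system for the susceptances is the precise version of your looser ``neighborhood of every direction'' remark. Your complex-affine-line argument for $y_2+y_3+y_\ell\neq0$ is more careful than the paper's: the paper only says that leftover conductance freedom avoids this case for noncollinear $v_m$, and does not address it for collinear points.
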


\noindent\emph{Proof:} See Appendix~\ref{app:proof_nonnegative_resistance_vm_reachability}.

Theorem~6 shows why the series-shunt realization has a fundamentally larger
voltage-reachability set than series VI. The shunt branch introduces the
additional term $y_3v_m$, so the required network-current balance is no longer
carried by the series branch alone. For noncollinear $v_m$, the two adjustable
branch terms provide sufficient independent freedom to satisfy the node
equation. Only when $v_m$ is collinear with $v_p$ does this freedom degenerate,
leaving the one-dimensional feasibility restrictions in \eqref{eq:vm_nonnegative_resistance_characterization}. The converter terminal can therefore participate in the network current
balance through both series- and shunt-like actions, rather than relying
solely on a larger series voltage drop.

\begin{figure}[!htbp]
  \centering
  \includegraphics[width=0.86\columnwidth]
  {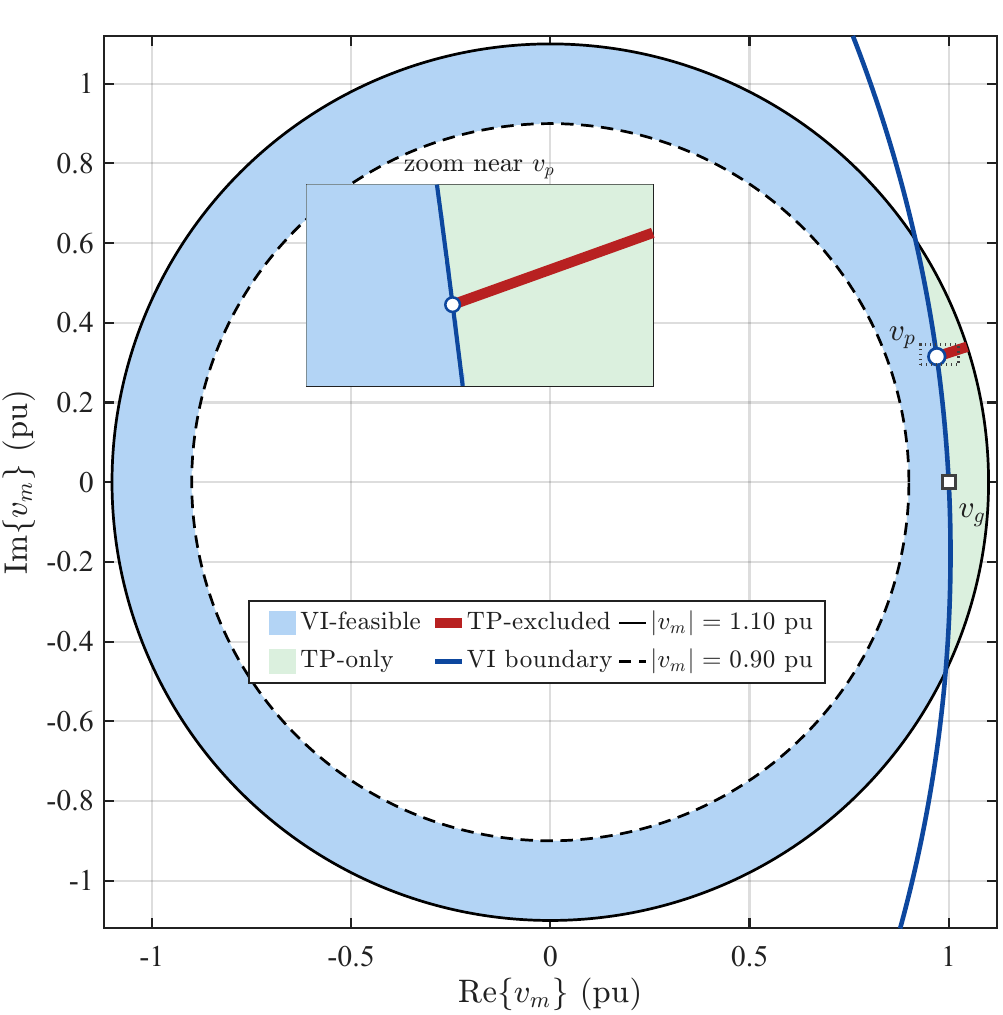}
  \caption{Algebraically reachable voltage sets within
$0.90\leq|v_m|\leq1.10$ pu for positive-resistance series-shunt control
and nonnegative-resistance series VI.}
  \label{fig:vm_feasible_region_comparison}
\end{figure}

The reachability difference predicted by
Theorem~\ref{thm:nonnegative_resistance_vm_reachability}
is illustrated in Fig.~\ref{fig:vm_feasible_region_comparison} for
$v_p=1.02\angle18^\circ$ pu, $v_g=1\angle0^\circ$ pu, and
$z_\ell=0.02+\mathrm{j}0.45$ pu.
The blue region is reachable by series VI according to
\eqref{eq:vi_vm_half_plane_condition}, while the green region is enabled by
the additional series-shunt degree of freedom. The remaining red radial
segment corresponds to the collinear restriction identified in
Theorem~\ref{thm:nonnegative_resistance_vm_reachability}.
Within the considered voltage annulus, series VI reaches $94.03\%$ of the
evaluated points, whereas the positive-resistance series-shunt realization
reaches all points except this one-dimensional segment. Thus, the added shunt
degree of freedom reduces the VI-infeasible region from a finite-area subset
to a one-dimensional exceptional set while retaining positive branch
resistance. This comparison characterizes algebraic voltage reachability under the
stated branch-resistance conditions. Practical implementation further
requires admissible converter voltage and current commands, modulation
headroom, and stable operation within the available control bandwidth.
These constraints restrict the usable subset of the plotted region.
The theoretical result identifies the additional reachability provided by
series--shunt compensation; the achievable improvement under implementation
limits depends on the operating conditions.

\subsection{Case Study}
\label{subsec:case2_upfc}

We next evaluate this capability in the asymmetric ring microgrid of
Fig.~\ref{fig:case2_system}. Two identical VSGs are connected to the load bus
through unequal transfer paths. Although their converter parameters and power
references are identical, the unequal paths give the two source ports
different self and source-to-load transfer terms. Equalizing these terms makes
the external network invariant to interchange of the two source ports and
removes the transfer-path asymmetry seen by the identical converters. This is
the mirror-symmetry objective used below. Independent series VI can modify the
two self impedances; the series-shunt two-port can also modify the source-to-
load transfer terms. The controller is designed from this network objective,
after which the load-bus voltage and power-sharing responses are evaluated
under the load sequence.
\begin{figure}[!htbp]
  \centering
  \includegraphics[width=0.76\columnwidth]{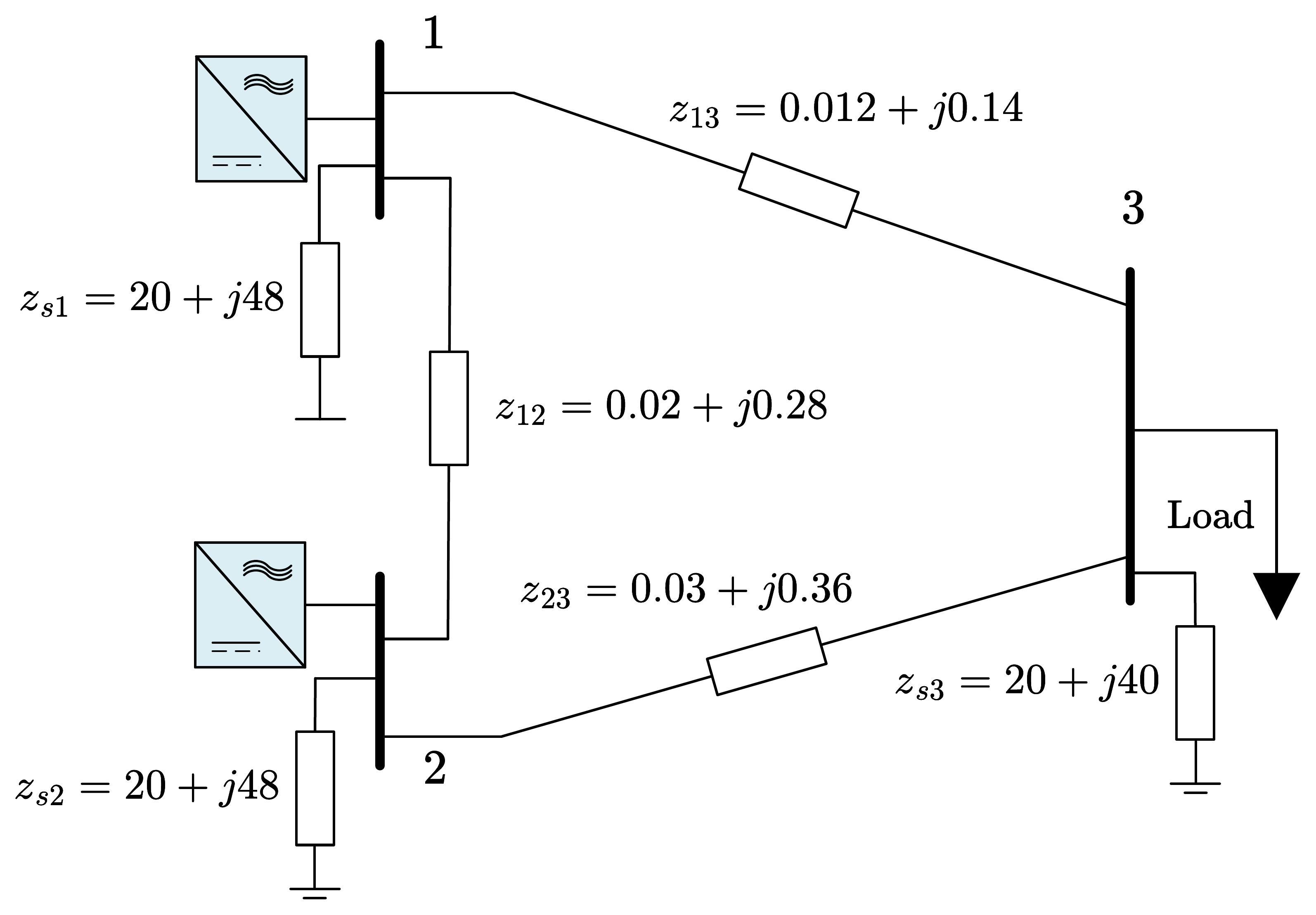}
  \caption{Asymmetric ring microgrid supplied by two identical VSGs.}
  \label{fig:case2_system}
\end{figure}

\begin{table}[!htbp]
  \centering
  \caption{Application 3 system and controller settings.}
  \label{tab:case2_settings}
  \footnotesize
  \setlength{\tabcolsep}{3.0pt}
  \begin{tabular}{@{}p{0.34\columnwidth}p{0.58\columnwidth}@{}}
    \toprule
    Quantity & Value \\
    \midrule
    Line impedances & $z_{12}=0.020+\mathrm{j}0.280$,
      $z_{13}=0.012+\mathrm{j}0.140$,
      $z_{23}=0.030+\mathrm{j}0.360$ pu \\
    Local shunts & $z_{s1}=z_{s2}=20+\mathrm{j}48$,
      $z_{s3}=20+\mathrm{j}40$ pu \\
    VSG parameters & $E_0=1.04$, $J=8.0\times10^{-4}$,
      $D_\omega=1.70$, $K_q=0.03$ \\
    Power references & $P^\star=0.43361$, $Q^\star=0.17698$
      for both VSGs \\
    Filter and PI gains & $L_f=2.0\times10^{-3}$ pu s,
      $(P_1,I_1)=(5,2.5\ {\rm s}^{-1})$,
      $(P_2,I_2)=(5.3366,2.5\ {\rm s}^{-1})$ \\
    Load sequence & $0.82+\mathrm{j}0.22$, $1.00+\mathrm{j}0.30$,
      $0.90+\mathrm{j}0.26$, $1.08+\mathrm{j}0.34$ pu \\
    Switching instants & $t=10,20,30$ s; $t_{\rm end}=40$ s \\
    Operating-voltage range & $0.95\leq|v_3|\leq1.05$ pu \\
    \bottomrule
  \end{tabular}
\end{table}

After eliminating the physical PCC nodes, let the retained order of the
three-port admittance $Y_{\rm red}$ be the two reconstructed source ports and
the load bus. The first two equalities below match the source self and
source-to-load terms; reciprocity gives the corresponding load-to-source
condition. Mirror symmetry therefore requires
\begin{equation}
 Y_{11}=Y_{22},\qquad Y_{13}=Y_{23},\qquad Y_{31}=Y_{32},
 \label{eq:case2_mirror_conditions}
\end{equation}
and is measured by
\begin{equation}
 \varepsilon_{\rm sym}=
 \frac{\sqrt{|Y_{11}-Y_{22}|^2+|Y_{13}-Y_{23}|^2+|Y_{31}-Y_{32}|^2}}
 {\|Y_{\rm red}\|_{\rm F}}.
 \label{eq:case2_symmetry_error}
\end{equation}
The VI benchmark minimizes this same index subject to the operating-voltage
range in Table~\ref{tab:case2_settings}. In the three-port impedance representation,
its structural action is
\begin{equation}
 Z_{\rm VI}=Z_{\rm red}^{\rm base}
 +\operatorname{diag}\{z_{v,1},z_{v,2},0\},
 \label{eq:case2_vi_impedance_update}
\end{equation}
where $Z_{\rm red}^{\rm base}$ is the reduced three-port impedance of the
physical network before terminal compensation. Series VI changes the two
source self terms while retaining the original transfer terms. The constrained
minimum is obtained with
$z_{v,1}=0.012010+\mathrm{j}0.024624$ pu and
$z_{v,2}=0.010169+\mathrm{j}0.000044$ pu. For the present network,
$Z_{13}^{\rm base}-Z_{23}^{\rm base}
=0.002494+\mathrm{j}0.024850$ pu, which leaves a finite mirror residual under every pair of series branches. 

The nonzero transfer-term mismatch therefore prevents series VI from
achieving exact mirror symmetry. In contrast, the series-shunt realization provides an additional degree of freedom to modify the transfer terms as well as the source self terms. We use the open converter-side shunt specialization in \eqref{eq:open_converter_shunt_abcd} and determine its coefficients by
minimizing \(\varepsilon_{\rm sym}\) subject to the operating-voltage
constraint in Table~\ref{tab:case2_settings}. The resulting coefficients are
\begin{equation}
\begin{aligned}
 A_1&=0.006856637600+\mathrm{j}0.039440081202,\\
 B_1&=C_1=1,\\
 D_1&=-0.011926722623+\mathrm{j}0.003561361920,\\
 A_2&=0.000968346384-\mathrm{j}0.039440081685,\\
 B_2&=C_2=1,\\
 D_2&=-0.000100000000-\mathrm{j}0.009443750550.
\end{aligned}
\label{eq:case2_abcd_values}
\end{equation}

The power-flow consequence of the network reshaping is evaluated at the
physical PCC, where
\begin{equation}
 S_{{\rm PCC},k}
 =-v_{{\rm PCC},k}\tilde i_{{\rm PCC},k}^{*}
 =P_{{\rm PCC},k}+\mathrm{j}Q_{{\rm PCC},k}.
 \label{eq:case2_pcc_power}
\end{equation}
The corresponding power increments are referenced to the first load stage.
\begin{figure}[!htbp]
  \centering
  \includegraphics[width=0.86\columnwidth]
  {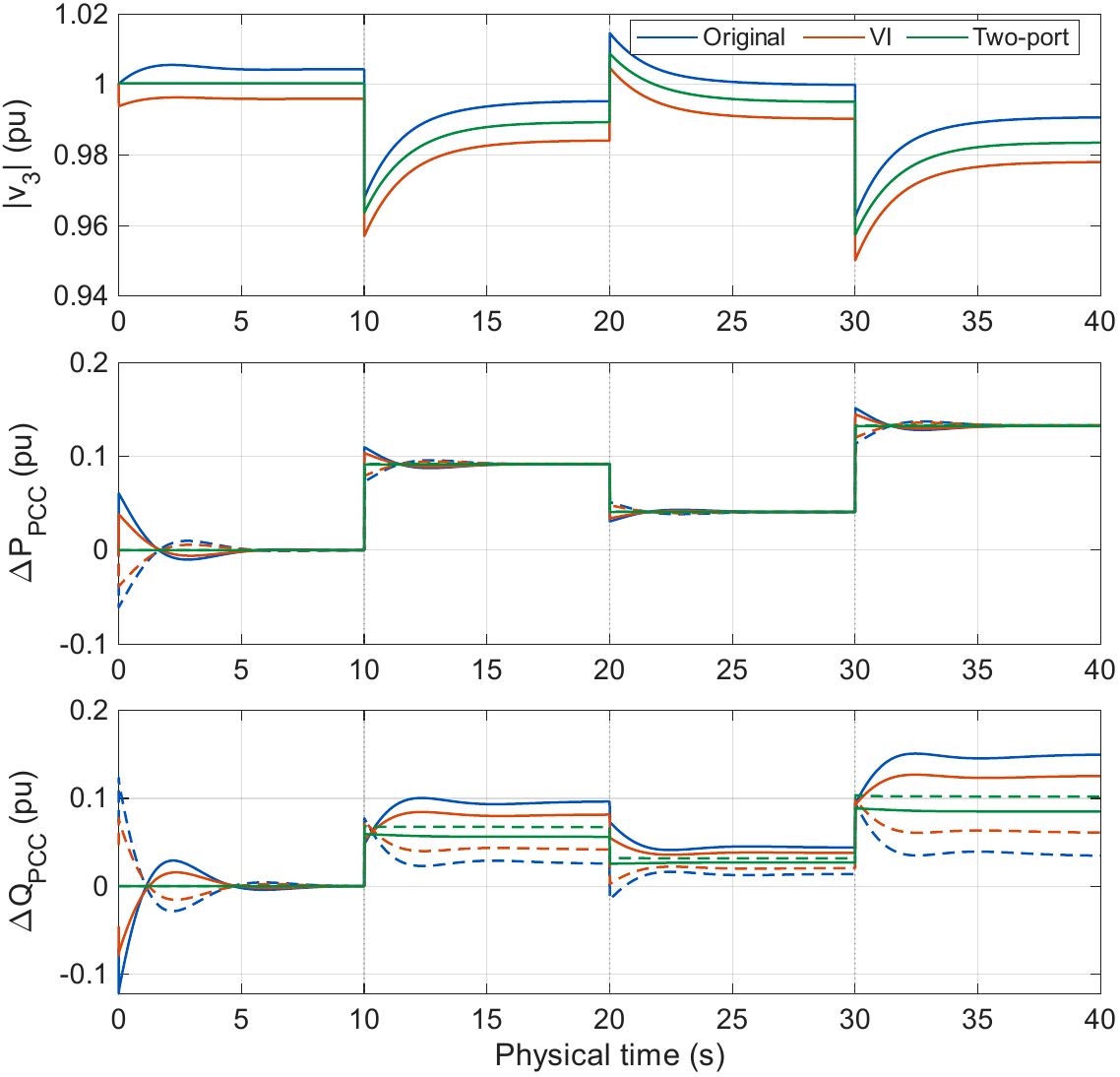}
\caption{Application 3 power-flow responses: common-bus voltage and
physical-PCC active- and reactive-power increments under the four-stage load
sequence. Solid and dashed curves denote VSGs 1 and 2, respectively.}
  \label{fig:case2_external_port_delta_pq}
\end{figure}
\begin{table}[!htbp]
  \centering
  \caption{Application 3 network-symmetry and power-flow results.}
  \label{tab:case2_symmetry_consequences}
  \scriptsize
  \setlength{\tabcolsep}{3.0pt}
  \resizebox{\columnwidth}{!}{%
  \begin{tabular}{lccc}
    \toprule
    Metric & Original & VI & Two-port\\
    \midrule
    $\varepsilon_{\rm sym}$
      & $3.79\times10^{-1}$ & $2.39\times10^{-1}$ & $1.88\times10^{-16}$\\
    $\max|\Delta P_{\rm PCC,1}-\Delta P_{\rm PCC,2}|$
      & $6.70\times10^{-4}$ & $1.24\times10^{-3}$ & $6.57\times10^{-4}$\\
    $\max|\Delta Q_{\rm PCC,1}-\Delta Q_{\rm PCC,2}|$
      & $1.15\times10^{-1}$ & $6.45\times10^{-2}$ & $1.72\times10^{-2}$\\
    $|v_3|$ range
      & 0.962-1.015 & 0.950-1.005 & 0.957-1.009\\
    \bottomrule
  \end{tabular}}
\end{table}

Fig.~\ref{fig:case2_external_port_delta_pq} shows the PCC power responses
throughout the four-stage load sequence, while
Table~\ref{tab:case2_symmetry_consequences} summarizes the corresponding
network-symmetry, power-mismatch, and voltage metrics. VI reduces
$\varepsilon_{\rm sym}$ from $0.379$ to $0.239$, whereas the series-shunt
two-port nearly eliminates the network asymmetry, reducing it to
$1.88\times10^{-16}$. The improvement is reflected most clearly in the
reactive-power response: the maximum PCC reactive-power increment mismatch
decreases from $0.115$ pu in the original network and $0.0645$ pu with VI to
$0.0172$ pu with the two-port. The active-power mismatches remain small in all
three cases. Meanwhile, the minimum load-bus voltage is $0.957$ pu with the
two-port, compared with $0.950$ pu with VI.

The results reveal the practical value of the added shunt degree of freedom.
Series VI modifies only the source self terms, whereas the two-port can also
reshape the source-to-load transfer terms. This additional structural freedom
allows the two-port to remove the transfer-path asymmetry much more completely,
which is reflected in the substantially closer reactive-power responses at the
physical PCCs. At the same time, the two-port produces a smaller load-bus
voltage drop than VI in the tested operating range. The scalability of this series–shunt coordination mechanism is further examined in Appendix~\ref{app:ieee33_network_validation}-B on a modified IEEE 33-bus microgrid with eight VSGs under multiple loading and feeder-reconfiguration conditions.

\vspace{-3pt}
\section{Conclusion}
\label{sec:conclusion}

This paper has developed virtual two-port control as a structured way to
reshape the terminal characteristics governing converter-network
interactions in inverter-based microgrids. Its four control channels jointly reconstruct terminal
voltage and current, producing a linear-fractional impedance map with VI as a
degenerate case. The derived return-inverse criterion guarantees stability and
properness of the controlled impedance.
The theoretical analyses and case studies demonstrate three advantages
over VI: a prescribed passivity margin can be achieved with lower controller
coefficient gain, avoiding limit-induced microgrid response deviations in
the tested case; uncertainty-induced variation in a network-facing terminal
characteristic can be substantially compressed rather than merely translated;
and power sharing can be improved with less voltage drop through coordinated
series-shunt compensation. These results show that reconstructing both voltage and current provides
additional structural freedom for dynamic interaction shaping and steady-state
power-flow regulation in inverter-based microgrids.

\appendices

\setlength{\abovedisplayskip}{1pt plus 0.3pt minus 0.3pt}
\setlength{\belowdisplayskip}{1pt plus 0.3pt minus 0.3pt}
\setlength{\abovedisplayshortskip}{0.5pt plus 0.3pt minus 0.3pt}
\setlength{\belowdisplayshortskip}{0.5pt plus 0.3pt minus 0.3pt}

\section{Proof of Theorem~\ref{thm:stable_reconstruction}}
\label{app:proof_stable_reconstruction}

Because \(H,D\in\mathcal{RH}_{\infty}^{2\times2}\), the matrix
\(I-DH\) has no poles in the closed right half-plane. Its inverse is proper
if and only if \(I-D(\infty)H(\infty)\) is nonsingular, and it is stable if
and only if \(\det[I-D(s)H(s)]\) has no zeros for
\(\operatorname{Re}(s)\geq0\). These two requirements give
the conditions in Theorem~\ref{thm:stable_reconstruction}. When they hold, closure of
\(\mathcal{RH}_\infty\) under products gives
\(SC,HSC\in\mathcal{RH}_{\infty}^{2\times2}\). Therefore,
\[
 i=SC\tilde i,\qquad
 \tilde u=HSC\tilde i,
 \qquad
 \widetilde Z=HSC\in\mathcal{RH}_{\infty}^{2\times2}.
\]
\section{Finite Reciprocal $\Pi$-Equivalent Conditions}
\label{app:reciprocal_pi_details}
We first introduce the notion of symmetric impedance. A symmetric impedance satisfies
\(Z_{dd}(s)=Z_{qq}(s)\) and \(Z_{qd}(s)=-Z_{dq}(s)\). Accordingly, define
\begin{equation*}
\mathcal S_{dq}
:=
\left\{
\begin{bmatrix}
a&-b\\
b&a
\end{bmatrix}
\;\middle|\;
a,b\in\mathbb C
\right\}
\subset\mathbb C^{2\times2}.
\end{equation*}
The matrix representation is used in the general derivation, whereas static
coefficients in \(\mathcal S_{dq}\) are reported below in the equivalent
complex-scalar form \(a+\mathrm{j}b\).

Under the current convention of \eqref{eq:z_parameter_two_port}, reciprocity
is $Z_{12}=Z_{21}$. Consider first the finite reciprocal $\Pi$ network of
Fig.~\ref{fig:virtual_pi_network}. Here, $Z_k^{-1}$ denotes the algebraic
rational-matrix inverse. The algebraic equivalence requires the indicated
inverses to exist; a stable proper dynamic branch realization additionally
requires its implemented inverse blocks to belong to $\mathcal{RH}_\infty$.

Let $\Sigma_Z:=Z_1+Z_2+Z_3$. Since matrices in $\mathcal S_{dq}$ commute,
inverting \eqref{eq:pi_admittance_two_port} gives, when $\Sigma_Z$ is
nonsingular,
\begin{equation}
\left\{
\begin{aligned}
 Z_{11} &= Z_1(Z_2+Z_3)\Sigma_Z^{-1},\\
 Z_{12} &= Z_1Z_3\Sigma_Z^{-1},\\
 Z_{21} &= Z_1Z_3\Sigma_Z^{-1},\\
 Z_{22} &= Z_3(Z_1+Z_2)\Sigma_Z^{-1}.
\end{aligned}
\right.
\label{eq:pi_to_z_parameters}
\end{equation}
Thus every finite reciprocal $\Pi$ network induces reciprocal impedance
parameters.

Conversely, suppose that
\[
 Z_{11},Z_{12},Z_{21},Z_{22}\in\mathcal S_{dq},
 \qquad Z_{12}=Z_{21},
\]
and define
\[
 \Delta_Z:=Z_{11}Z_{22}-Z_{12}Z_{21}.
\]
If $\Delta_Z$ is nonsingular, commutativity gives
\begin{equation}
 Z_{\mathrm{tp}}^{-1}
 =
 \Delta_Z^{-1}
 \begin{bmatrix}
  Z_{22} & -Z_{12}\\
  -Z_{12} & Z_{11}
 \end{bmatrix}.
 \label{eq:z_parameter_inverse_matrix}
\end{equation}
Comparison with \eqref{eq:pi_admittance_two_port} yields
\begin{equation}
\left\{
\begin{aligned}
 Z_1 &= \Delta_Z(Z_{22}-Z_{12})^{-1},\\
 Z_2 &= \Delta_ZZ_{12}^{-1},\\
 Z_3 &= \Delta_Z(Z_{11}-Z_{12})^{-1}.
\end{aligned}
\right.
\label{eq:z_parameters_to_pi_impedances}
\end{equation}
The finite nonsingular reciprocal $\Pi$ equivalent is therefore well defined
when
\begin{equation}
\begin{aligned}
 \det\Delta_Z&\neq0,
 &
 \det Z_{12}&\neq0,\\
 \det(Z_{22}-Z_{12})&\neq0,
 &
 \det(Z_{11}-Z_{12})&\neq0.
\end{aligned}
\label{eq:finite_pi_z_conditions}
\end{equation}
Under these conditions, the equivalent is unique because
\eqref{eq:pi_admittance_two_port} uniquely determines $Z_1^{-1}$,
$Z_2^{-1}$, and $Z_3^{-1}$.

\section{Proof of Theorem~\ref{thm:finite_gain_vi_limit}}
\label{app:proof_minimum_vi_gain}
For any feasible \(Z_v\), Weyl's inequality gives
\[
 \mu(Z+Z_v;\Omega)
 \leq \mu_0+\|Z_v\|_\infty.
\]
Hence
\[
 \|Z_v\|_\infty\geq\varepsilon-\mu_0.
\]
The constant choice
\[
 Z_v=(\varepsilon-\mu_0)I
\]
attains this bound and gives
\(\mu(Z+Z_v;\Omega)=\varepsilon\). Therefore,
\[
 \gamma_{\rm VI}^{\star}(\varepsilon)
 =\varepsilon-\mu_0.
\]

\section{Proof of Theorem~\ref{thm:reduced_gain_tp}}
\label{app:proof_reduced_gain_tp}

For the two-port in \eqref{eq:reduced_gain_tp_blocks}, \(D=0\) gives
\[
 S=[I-D(A+BZ)]^{-1}=I.
\]
The reconstruction therefore satisfies the stable-proper return condition,
and, because \(C=I\),
\[
 \widetilde Z_\rho=A+BZ
 =(\varepsilon-\rho\mu_0)I+\rho Z.
\]
Since \(\rho>0\),
\begin{align*}
 \operatorname{He}\{\widetilde Z_\rho(\mathrm{j}\omega)\}
 &=(\varepsilon-\rho\mu_0)I
 +\rho\operatorname{He}\{Z(\mathrm{j}\omega)\},\\
 \lambda_{\min}\!\left[
 \operatorname{He}\{\widetilde Z_\rho(\mathrm{j}\omega)\}\right]
 &=(\varepsilon-\rho\mu_0)
 +\rho\lambda_{\min}\!\left[
 \operatorname{He}\{Z(\mathrm{j}\omega)\}\right].
\end{align*}
Taking the infimum over \(\Omega\) gives
\(\mu(\widetilde Z_\rho;\Omega)=\varepsilon\).

The coefficient matrix is
\[
 \begin{bmatrix}
 A&B-I\\
 C-I&D
 \end{bmatrix}
 =
 \begin{bmatrix}
 (\varepsilon-\rho\mu_0)I&-(1-\rho)I\\
 0&0
 \end{bmatrix}.
\]
Its induced norm is therefore
\[
 \gamma_{\rm TP}(\rho)
 =\sqrt{(\varepsilon-\rho\mu_0)^2+(1-\rho)^2}.
\]
Finally,
\begin{align*}
 &[\gamma_{\rm VI}^{\star}(\varepsilon)]^2
 -\gamma_{\rm TP}^2(\rho)\\
 &\quad=(1-\rho)
 \left[(1+\rho)\mu_0^2-2\varepsilon\mu_0+\rho-1\right].
\end{align*}
Because \(0<\rho<1\), strict positivity is equivalent to
\eqref{eq:strict_gain_condition}.

\section{Proof of Theorem~\ref{thm:vi_family_translation}}
\label{app:proof_vi_family_translation}

For every \(p\in\mathcal P\),
\[
 \operatorname{He}\{Z(\mathrm{j}\omega,p)\}
 \succeq
 \eta(\omega,p)I
 \succeq
 \underline\eta(\omega)I.
\]
Adding \eqref{eq:general_vi_passivation_condition} gives
\[
 \operatorname{He}\{Z+Z_v\}\succeq mI,
\]
which proves VI passivation. Since the same \(Z_v\) is applied to every
\(p\), it cancels from the difference between any two family members,
proving \eqref{eq:general_vi_pairwise_invariance}.

If, in addition,
\[
 \operatorname{He}\{Z_v(\mathrm{j}\omega)\}
 =r_v(\omega)I,
 \qquad r_v(\omega)\in\mathbb R,
\]
then
\[
 \eta_{\rm VI}(\omega,p)
 =
 \eta(\omega,p)+r_v(\omega).
\]
Both endpoints of the passivity-margin interval at each frequency receive
the same scalar shift, which proves $\Delta_\mu(\widetilde{\mathcal Z}_{\rm VI};\Omega) = \Delta_\mu(\mathcal Z;\Omega)$.

\section{Proof of
Theorem~\ref{thm:cayley_robust_passivation}}
\label{app:proof_cayley_robust_passivation}

For the prescribed lower bound \(m\) and diameter bound
\(\overline\Delta\), define
\begin{equation}
 \begin{gathered}
 M:=m+\overline\Delta,\qquad
 z_s:=\sqrt{mM},\\
 \rho:=
 \frac{\sqrt{M/m}-1}{\sqrt{M/m}+1},
 \qquad 0<\rho<1.
 \end{gathered}
 \label{eq:cayley_target_parameters}
\end{equation}
These parameters satisfy
\begin{equation}
 z_s\frac{1-\rho}{1+\rho}=m,
 \qquad
 z_s\frac{1+\rho}{1-\rho}=M.
 \label{eq:cayley_boundary_relations}
\end{equation}

Let
\[
 Z_c:=Z(\cdot,p_c)
\]
and define the maximum deviation from this reference as
\begin{equation}
 \delta_Z:=
 \sup_{p\in\mathcal P}
 \|Z(\cdot,p)-Z_c\|_\infty.
 \label{eq:family_deviation_bound}
\end{equation}
To normalize the impedance family within the radius \(\rho\), choose any
\(\alpha>0\) satisfying
\[
 \alpha\geq\frac{\delta_Z}{\rho},
\]
and define
\begin{equation}
 Q:=\frac{Z-Z_c}{\alpha}.
 \label{eq:cayley_normalized_deviation}
\end{equation}
It follows that
\begin{equation}
 \|Q(\cdot,p)\|_\infty
 \leq
 \frac{\delta_Z}{\alpha}
 \leq\rho<1,
 \qquad p\in\mathcal P.
 \label{eq:cayley_normalized_bound}
\end{equation}

The inverse Cayley transformation associates this normalized radius with the
two bounds in \eqref{eq:cayley_boundary_relations}. For any \(h>0\), choose
\begin{equation}
 \begin{aligned}
 A&=h\left(I-\frac{Z_c}{\alpha}\right),&
 B&=\frac{h}{\alpha}I,\\
 C&=\frac{z_s}{2h}I,&
 D&=\frac{1}{2h}I.
 \end{aligned}
 \label{eq:cayley_twoport_realization}
\end{equation}
Because \(Z_c\in\mathcal{RH}_{\infty}^{2\times2}\), all four blocks are
stable and proper. Moreover,
\[
 H=A+BZ=h(I+Q)
\]
and
\[
 S=(I-DH)^{-1}=2(I-Q)^{-1}.
\]
The bound in \eqref{eq:cayley_normalized_bound} gives
\[
 (I-Q)^{-1}\in\mathcal{RH}_{\infty}^{2\times2}.
\]
Theorem~\ref{thm:stable_reconstruction} therefore gives a stable and proper
reconstructed terminal relation, with
\begin{equation}
 \widetilde Z_{\rm TP}
 =
 HSC
 =
 z_s(I+Q)(I-Q)^{-1}.
 \label{eq:cayley_controlled_impedance}
\end{equation}

Let
\[
 T(Q):=(I+Q)(I-Q)^{-1}.
\]
Its Hermitian part is
\[
 \operatorname{He}\{T(Q)\}
 =
 (I-Q)^{-\mathrm H}
 (I-Q^{\mathrm H}Q)
 (I-Q)^{-1}.
\]
Since \(\bar\sigma(Q)\leq\rho\),
\[
 I-Q^{\mathrm H}Q
 \succeq
 (1-\rho^2)I,
 \qquad
 \bar\sigma(I-Q)\leq1+\rho.
\]
Consequently,
\[
 \operatorname{He}\{\widetilde Z_{\rm TP}\}
 \succeq
 z_s\frac{1-\rho}{1+\rho}I
 =
 mI,
\]
where the last equality follows from
\eqref{eq:cayley_boundary_relations}. In addition,
\[
 \bar\sigma(\widetilde Z_{\rm TP})
 \leq
 z_s\frac{1+\rho}{1-\rho}
 =
 M.
\]
It follows that, for every \(p\in\mathcal P\) and
\(\omega\in\Omega\),
\[
 m
 \leq
 \eta_{\rm TP}(\omega,p)
 \leq
 \|\operatorname{He}\{\widetilde Z_{\rm TP}\}\|_2
 \leq
 \bar\sigma(\widetilde Z_{\rm TP})
 \leq
 M.
\]
Thus every passivity-margin interval lies in \([m,M]\), and its width is at
most
\[
 M-m=\overline\Delta.
\]
Taking the supremum over \(\Omega\) proves $\Delta_\mu(\widetilde{\mathcal Z}_{\rm TP};\Omega) \leq \overline\Delta$.

For Application 2, evaluation of \eqref{eq:family_deviation_bound} over
\(D_\omega/D_{\omega0}\in[0.8,1.2]\) gives
\(\alpha=539.846\) pu. Minimization of the coefficient gain over the
realization scale gives \(h=0.546835\). Substitution of these auxiliary
quantities into \eqref{eq:cayley_twoport_realization} yields the blocks in
\eqref{eq:case2_final_abcd}.

\section{Proof of
Theorem~\ref{thm:nonnegative_resistance_vm_reachability}}
\label{app:proof_nonnegative_resistance_vm_reachability}

Write \(y_k=g_k+\mathrm{j}b_k\), with \(g_k>0\). If the prescribed
\(v_m\) is not on the line through \(0\) and \(v_p\), the two branch-voltage
vectors \(v_m-v_p\) and \(v_m\) are linearly independent over the reals.
After any positive \(g_2\) and \(g_3\) are selected, the real and imaginary
parts of \eqref{eq:positive_pi_node_equation} form a nonsingular real system
for \(b_2\) and \(b_3\). Thus every noncollinear \(v_m\) is reachable. The
remaining conductance freedom can be used to avoid the singular nodal case
\[
 y_2+y_3+y_\ell=0.
\]

For a collinear point, write \(v_m=tv_p\), with \(t\in\mathbb R\). The real
part of the normalized node equation becomes
\[
 (t-1)g_2+tg_3=\phi(t),
 \qquad g_2,g_3>0.
\]
For \(0<t<1\), the two coefficients have opposite signs, so positive
\(g_2,g_3\) can realize any value of \(\phi(t)\). For \(t\leq0\), the left
side is strictly negative and feasibility is equivalent to \(\phi(t)<0\).
For \(t\geq1\), it is strictly positive and feasibility is equivalent to
\(\phi(t)>0\). These statements include the endpoints \(t=0\) and \(t=1\).
Once the conductances are selected, the imaginary part leaves one linear
equation in \(b_2,b_3\), which always has a solution. This proves
\eqref{eq:vm_nonnegative_resistance_characterization}.

\section{Network-Level Validation on the Modified IEEE 33-Bus Microgrid}
\label{app:ieee33_network_validation}

Fig.~\ref{fig:app_ieee33_topology} shows the modified IEEE 33-bus
microgrid used for the network-level tests. Eight 1-MVA VSGs are connected
at buses 6, 10, 14, 18, 22, 25, 30, and 33. Both tests use this feeder
layout, its distributed loads, and the same VSG locations. The feeder
operates in islanded mode for the operating-point uncertainty test and is
connected to the utility grid at bus 1 for the steady-state series--shunt
coordination test.

\begin{figure}[!b]
\centering
\includegraphics[width=0.98\columnwidth]
{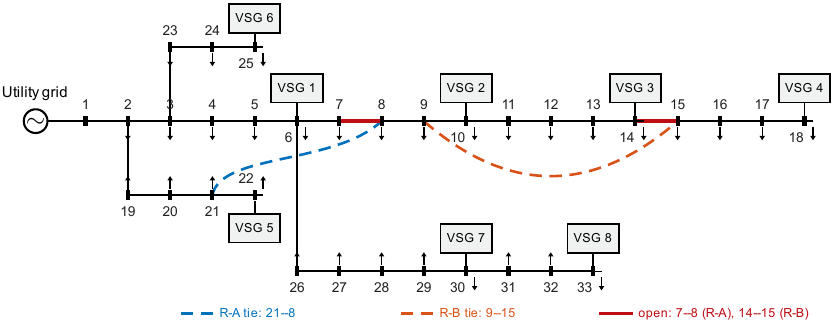}
\caption{Modified IEEE 33-bus microgrid with eight 1-MVA VSGs. The feeder
operates in islanded mode for the operating-point uncertainty test and in
grid-connected mode for the series--shunt coordination test.}
\label{fig:app_ieee33_topology}
\end{figure}

\subsection{Operating-Point Uncertainty and Bus-Voltage Response}
\label{app:ieee33_uncertainty}

The active-power operating point of each VSG varies independently over
$P_{0,k}\in[0.5,0.7]$ pu, $k=1,\ldots,8$. Each sampled operating point
defines a $2\times2$ terminal impedance $\bm Z(s,P_0)$. The fixed VI and
two-port designs are applied to all samples and yield reference
interaction-mode damping ratios of 0.78236 and 0.78125, respectively.

For $x\in\{0,\mathrm{VI},\mathrm{TP}\}$, define the complex-set diameter of
each impedance entry by
\begin{equation}
 D_{ij}^{x}(\omega)=
 \max_{p_a,p_b\in[0.5,0.7]}
 \left|Z_{ij}^{x}(\mathrm{j}\omega,p_a)
       -Z_{ij}^{x}(\mathrm{j}\omega,p_b)\right|,
 \label{eq:app_ieee33_impedance_diameter}
\end{equation}
where $ij\in\{dd,dq,qd,qq\}$. A fixed VI adds the same impedance matrix to
every member of the operating-point family and therefore leaves
$D_{ij}^{x}$ unchanged. Fig.~\ref{fig:app_ieee33_impedance_diameter} shows
that the tested two-port reduces all four diameters by approximately 67.8\%
at 0.010 Hz and maintains a comparable reduction over the tested frequency
band. The linear-fractional two-port map therefore contracts the
operating-point-dependent terminal-impedance family before it is coupled
through the microgrid.

\begin{figure}[!t]
\centering
\includegraphics[width=0.82\columnwidth]
{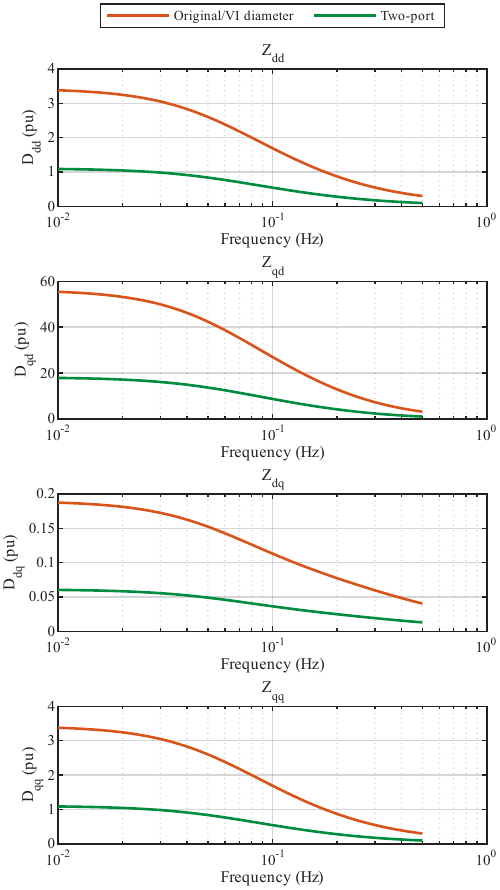}
\caption{Operating-point set diameters of the four terminal-impedance
entries. Original and VI diameters coincide because the fixed VI adds the
same impedance matrix to every member of the operating-point family.}
\label{fig:app_ieee33_impedance_diameter}
\end{figure}

For realization $r$, the closed-loop nodal admittance matrix of the
islanded microgrid is assembled as
\begin{equation}
 \bm Y_{\mathrm{cl}}(s;\bm P_0^{(r)})=
 \bm Y_{\mathrm{net},33}(s)+
 \sum_{k=1}^{8}\bm E_{b_k}
 \left[\bm Z_{\mathrm{t},k}
 (s,P_{0,k}^{(r)})\right]^{-1}
 \bm E_{b_k}^{\mathsf T},
 \label{eq:app_ieee33_network_assembly}
\end{equation}
where $\bm Y_{\mathrm{net},33}$ contains the modified IEEE 33-bus feeder
topology, branch impedances, and loads. The matrices $\bm E_{b_k}$ place
the eight $dq$ terminal admittances at buses 6, 10, 14, 18, 22, 25, 30,
and 33. As $\bm P_0^{(r)}$ changes, the resulting terminal-admittance
variations propagate through the nonuniform feeder and appear in the
nodal-voltage responses.

A sinusoidal current perturbation
\begin{equation}
 \delta i_{18,d}(t)=0.03\cos(\omega t)\ \mathrm{pu}
 \label{eq:app_ieee33_current_disturbance}
\end{equation}
is injected at bus 18. Its phasor is
$\Delta\bm i_{18}=0.03\bm e_{18,d}$, and the resulting nodal-voltage
phasor is
\begin{equation}
 \Delta\widehat{\bm v}^{(r)}(\mathrm{j}\omega)=
 \bm Y_{\mathrm{cl}}^{-1}
 (\mathrm{j}\omega;\bm P_0^{(r)})
 \Delta\bm i_{18}.
 \label{eq:app_ieee33_voltage_phasor}
\end{equation}
Here
\begin{equation*}
 \Delta\widehat{\bm v}_b^{(r)}=
 \begin{bmatrix}
 \Delta\widehat v_{b,d}^{(r)} &
 \Delta\widehat v_{b,q}^{(r)}
 \end{bmatrix}^{\mathsf T}
\end{equation*}
is the complex $dq$ voltage-response phasor at bus $b$ about the
equilibrium of realization $r$. Its magnitude is defined as
\begin{equation}
 |\Delta V_b^{(r)}(\omega)|:=
 \left\|\Delta\widehat{\bm v}_b^{(r)}
 (\mathrm{j}\omega)\right\|_2
 =
 \sqrt{
 |\Delta\widehat v_{b,d}^{(r)}|^2+
 |\Delta\widehat v_{b,q}^{(r)}|^2}.
 \label{eq:app_ieee33_voltage_response_magnitude}
\end{equation}

For the 120 sampled realizations, the pointwise envelope and its width are
\begin{equation}
 \underline g_b(\omega)
 =\min_r|\Delta V_b^{(r)}(\omega)|,
 \qquad
 \overline g_b(\omega)
 =\max_r|\Delta V_b^{(r)}(\omega)|,
 \label{eq:app_ieee33_voltage_envelope_bounds}
\end{equation}
\begin{equation}
 W_b(\omega)=
 \overline g_b(\omega)-\underline g_b(\omega).
 \label{eq:app_ieee33_voltage_envelope_width}
\end{equation}
In Fig.~\ref{fig:app_ieee33_voltage_uncertainty}, the interval
$[\underline g_b,\overline g_b]$ gives the range of the bus-voltage
response over the 120 operating-point realizations. The maximum envelope
widths at buses 6 and 33 decrease from 0.14762 and 0.14763 pu with VI to
0.007618 and 0.007598 pu with the two-port, corresponding to reductions of
94.8\% and 94.9\%, respectively. The contraction of the terminal-impedance
family is therefore reflected in a narrower range of physical bus-voltage
responses throughout the microgrid.

\begin{figure}[!t]
\centering
\includegraphics[width=0.82\columnwidth]
{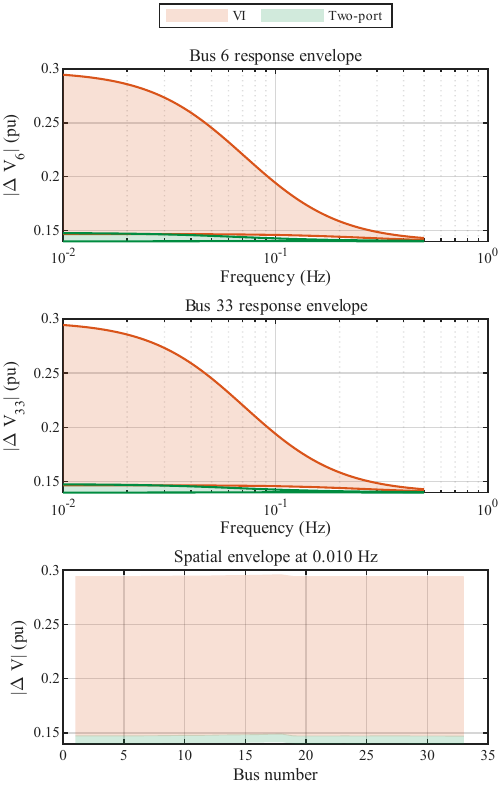}
\caption{Bus-voltage response uncertainty in the islanded eight-VSG
microgrid. Each shaded region gives the pointwise minimum-to-maximum
interval over 120 operating-point realizations. The spatial panel shows
the corresponding intervals at 0.010 Hz for all buses.}
\label{fig:app_ieee33_voltage_uncertainty}
\end{figure}

To illustrate the operating-point-dependent response in the time domain,
the $d$-axis voltage deviation at bus 18 is obtained from
\begin{equation}
 \delta v_{18,d}^{(r)}(t)=
 \Re\!\left\{
 \Delta\widehat v_{18,d}^{(r)}
 (\mathrm{j}2\pi f^\star)
 e^{\mathrm{j}2\pi f^\star t}
 \right\},
 \label{eq:app_ieee33_voltage_reconstruction}
\end{equation}
where $f^\star=0.010$ Hz. The two endpoint realizations set all eight VSGs
to $P_{0,k}=0.7$ and $0.5$ pu, respectively, and produce the minimum and
maximum VI responses at bus 33. For these realizations, the maximum
separation between the bus-18 voltage-deviation trajectories decreases
from 0.01267 pu with VI to 0.004077 pu with the two-port, a reduction of
67.8\%, as shown in Fig.~\ref{fig:app_ieee33_voltage_time}. The two-port
therefore produces a more consistent microgrid voltage response as the
eight VSG operating points vary.

\begin{figure}[!t]
\centering
\includegraphics[width=0.82\columnwidth]
{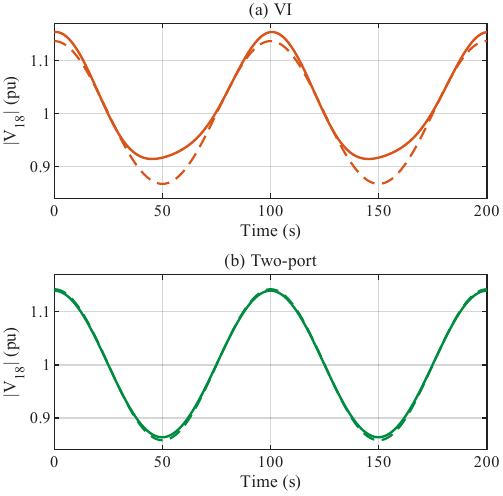}
\caption{Bus-18 $d$-axis voltage deviations under the 0.03-pu, 0.010-Hz
$d$-axis current perturbation: (a) VI and (b) two-port control. Dashed and
solid lines denote the endpoint realizations with $P_{0,k}=0.7$ and
$0.5$ pu, respectively. Both panels use the same vertical scale.}
\label{fig:app_ieee33_voltage_time}
\end{figure}

\subsection{Multi-Scenario Series--Shunt Coordination}
\label{app:ieee33_series_shunt}

In the grid-connected steady-state test, every VSG exports 0.30 MW through
an interface with $R=0.01$ pu and $X=0.06$ pu. The VI design uses series
reactances in $[0,0.18]$ pu. The reciprocal two-port uses series reactances
in the same range together with network-side shunt susceptances in
$[-0.30,0.30]$ pu.

For each strategy, one fixed parameter set is used under six operating
conditions: 0.8-pu uniform loading (L), 1.0-pu uniform loading (N), 1.2-pu
uniform loading (H), 1.0-pu loading with the loads at buses 23--33 increased
by 35\% (D), opening branch 7--8 and closing tie 21--8 (R-A), and 1.1-pu
loading with branch 14--15 open and tie 9--15 closed (R-B).

For each condition, the reactive-power spread among the eight VSGs is
defined as
\begin{equation}
 Q_{\mathrm{spr}}:=\max_k Q_k-\min_k Q_k.
 \label{eq:app_ieee33_qspread}
\end{equation}

\begin{figure}[!b]
\centering
\includegraphics[width=0.76\columnwidth]
{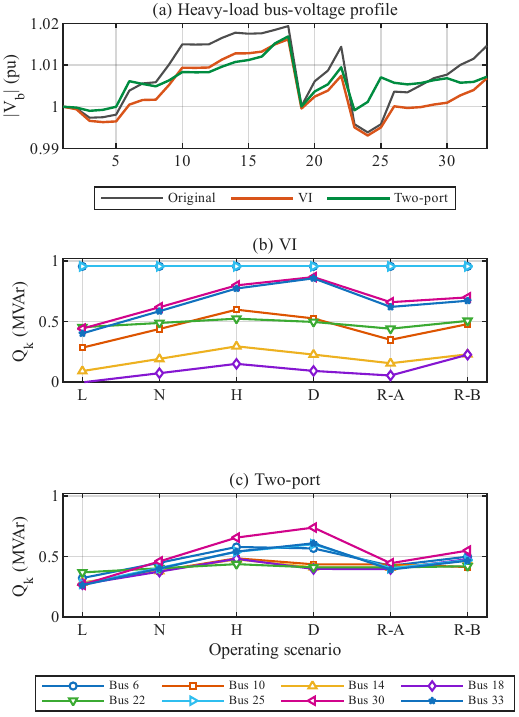}
\caption{Network-level series--shunt coordination: (a) heavy-load voltage
profiles and VSG reactive-power outputs across the six operating conditions
with (b) VI and (c) two-port control.}
\label{fig:app_ieee33_reactive_sharing}
\end{figure}

Fig.~\ref{fig:app_ieee33_reactive_sharing}(a) shows the heavy-load voltage
profiles. The two-port raises the minimum bus voltage from 0.99308 pu with VI
to 0.99901 pu. Panels (b) and (c) show the actual reactive-power outputs of
all eight VSGs across the six operating conditions. Under heavy loading, the
VI outputs span 0.1507--0.9539 MVAr, whereas the two-port outputs are confined
to 0.4367--0.6556 MVAr. The corresponding $Q_{\mathrm{spr}}$ is reduced by
72.8\%.

Across all six conditions, the reductions in $Q_{\mathrm{spr}}$ range from
60.4\% to 94.2\%. The clustering of the two-port trajectories persists under
uniform loading, downstream load concentration, and both feeder
reconfigurations. The network-side shunt channel therefore redistributes
reactive-power support across the nonuniform feeder and produces more uniform
VSG outputs than the series VI.

Under heavy loading, the two-port also reduces the maximum VSG output current
from 45.83 to 32.66 A. The associated shunt current increases the maximum
feeder current from 139.02 to 193.11 A and the network active-power loss from
0.09653 to 0.15775 MW. Thus, the improved reactive-power sharing, voltage
profile, and converter-current utilization are accompanied by higher feeder
current and network loss, which constrain practical parameter selection.

\FloatBarrier

\vspace{-9pt}
\begingroup
\renewcommand{\IEEEbibitemsep}{-0.5pt}
\bibliographystyle{IEEEtran}
\bibliography{refs}
\endgroup

\end{document}